\theoremstyle{plain}
\newtheorem{theorem}{Theorem}
\theoremstyle{definition}
\theoremstyle{definition}
\newtheorem{Remarks}{Remark}
\theoremstyle{plain}
\newtheorem{lemma}{Lemma}
\theoremstyle{plain}
\newtheorem{proposition}{Proposition}
\begin{document}

\title{Nonparametric Decentralized Sequential Detection via Universal Source Coding}
\author{Jithin~K.~Sreedharan and Vinod~Sharma,~\IEEEmembership{Senior~Member,~IEEE}
\thanks{Jithin K.\ Sreedharan is with INRIA Sophia Antipolis, France (email: jithin.sreedharan@inria.fr)}
\thanks{Vinod Sharma is with Department of Electrical Communication Engineering, Indian Institute of Science, Bangalore, India (email: vinod@ece.iisc.ernet.in)}
\thanks{Preliminary versions of this paper have been presented in Allerton 2011, Globecom 2012 and ITA 2013.}}
\maketitle

\begin{abstract}
We consider nonparametric or universal sequential hypothesis testing problem when the distribution under the null hypothesis is fully known but the alternate hypothesis corresponds to some other unknown distribution. These algorithms are primarily motivated from spectrum sensing in Cognitive Radios and intruder detection in wireless sensor networks. We use easily implementable universal lossless source codes to propose simple algorithms for such a setup. The algorithms are first proposed for discrete alphabet. Their performance and asymptotic properties are studied theoretically. Later these are extended to continuous alphabets. Their performance with two well known universal source codes, Lempel-Ziv code and Krichevsky-Trofimov estimator with Arithmetic Encoder are compared. These algorithms are also compared with the tests using various other nonparametric estimators. Finally a decentralized version utilizing spatial diversity is also proposed. Its performance is analysed and asymptotic properties are proved.
\end{abstract}

\begin{IEEEkeywords}
Sequential Hypothesis Testing, Universal Testing, Universal Source Codes, Distributed Detection.
\end{IEEEkeywords}

\section{Introduction}
Distributed detection (\cite{Viswanathan_IEEE-1}) in which different local nodes interact each other to make a final decision, has been quite popular recently due to its relevance to distributed radar, sensor networks (\cite{Chamberland_IEEESPM07}), distributed databases and cooperative spectrum sensing in Cognitive radios (\cite{Akyildiz_PC2011,Jithin_WCNC2011,Jithin_JSAC-submitted_arxiv}). It decreases probability of errors and detection delay by making use of spatial diversity and mitigates the effects of multipath fading, shadowing and hidden node problem experienced in single node detection.

Distributed detection can use either decentralized or centralized algorithms. In the centralized framework, the information received by the local nodes or sensors are transmitted directly to the fusion center (FC) to decide upon the hypothesis. In decentralized detection each local node sends a summarized or quantized information to the fusion center (\cite{Viswanathan_IEEE-1, Chamberland_IEEESPM07}). The fusion center ultimately decides upon which hypothesis is true. Thus decentralized detection although suboptimal, is more bandwidth and energy efficient. A drawback of a decentralized scheme is that the fusion center makes the decision based on less information. Hence the main challenge of decentralized detection algorithms is to provide a reliable decision with this information. Performance depends on the local node and fusion node detection policies and the type of feedback from the fusion node to the local nodes. The main resource constraints for decentralized detection schemes include number of nodes, finite alphabet constraint on output of each local node, limited spectral bandwidth, total cost of the system and stringent power requirements.

Two of the important formulations of distributed detection problem are based on the number of samples required for making a decision: fixed sample size and sequential detection (\cite{Lai_SS2001,Siegmund_SATC_Book}). In fixed sample size detection, the likelihood ratio test on the received data minimises the probability of error at the fusion center for a binary hypothesis testing problem. Hence the real problem in this case is to decide the type of information each local node should send to the fusion center. Interestingly likelihood tests at the local nodes are optimal whenever the observations are conditionally independent, given the hypothesis (\cite{Chamberland_IEEESPM07}).

In the sequential case, the observations are sampled sequentially at the local nodes until a stopping rule is satisfied. The decision and stopping rules are designed to reduce the number of samples for decision making with reliability constraints. More precisely, sequential detectors can detect change in the underlying hypothesis or test the hypothesis (\cite{Lai_SS2001,Siegmund_SATC_Book}). In this paper we focus on decentralized sequential hypothesis testing. It is well known that in case of a single node, Sequential Probability Ratio Test (SPRT) out-performs other sequential  or fixed sample size detectors (\cite{Siegmund_SATC_Book}). In case of decentralized setup, optimization needs to be performed jointly over the local nodes and the fusion center policies as well as over time. Unfortunately, this problem is intractable for most scenarios (\cite{Mei_TIT2008,Veeravalli1999}). Specifically there is no optimal solution available when there is no feedback from the fusion center and there is limited local memory, which are more relevant in practical situations. In parametric case, \cite{Fellouris_TIT2011} and \cite{Mei_TIT2008} proposed asymptotically optimal (order 1 (Bayes) and order 2 respectively) decentralized sequential hypothesis tests for such systems with full local memory. But these models do not consider noise at the fusion center and assume a perfect communication channel between the local nodes and the fusion center. Noisy channels between local nodes and fusion center are considered in \cite{Yilmaz_Allerton2012_journal}. Also recently \cite{Jithin_JSAC-submitted_arxiv} studied the setup when the communication channel between the local nodes and the FC is a noisy MAC (Multiple Access Channel) and nearly asymptotically optimal algorithms are derived.

Sequential methods in case of uncertainties are surveyed in \cite{Lai_SS2001} for a parametric family of distributions. For nonparametric sequential methods, \cite{Nitis_SMTA_Book} provides separate algorithms for different setups like changes in mean, changes in variance etc. In this paper we propose unified simple sequential hypothesis testing algorithms using universal source coding (\cite{Cover_EIT_book}) where the unknown alternate distribution can belong to a nonparametric family and study its properties.

An optimal fixed sample size universal test for finite alphabets is derived in \cite{Hoeffding_AMS1965}. Error exponents for these tests are studied in \cite{Levitan_TIT2002}. In \cite{Unnikrishnan_TIT2011} mismatched divergence is used to study this problem. Statistical inference with universal source codes, started in \cite{Rissanen_TIT1984} where classification of finite alphabet sources is studied in the fixed sample size setup. \cite{Tony_ISIT2008} considers the universal hypothesis testing problem in the sequential framework using universal source coding. It derives asymptotically optimal one sided sequential hypothesis tests and sequential change detection algorithms for countable alphabet. In one sided tests one assumes null hypothesis as the default hypothesis and has to wait a long time to confirm whether it is the true hypothesis (it is the true hypothesis only when the test never stops). In many practical applications where it is important to make a quick decision (e.g., in Cognitive Radios) this setup is not suitable.

In this paper, we consider universal source coding framework for  binary hypothesis sequential testing with continuous alphabets. Section \ref{sec:model} describes the model. Section \ref{sec:ch6:uni_seq_fin_alp} provides our algorithm for a single node with finite alphabet. We prove almost sure finiteness of the stopping time. Asymptotic properties of probability of error and moment convergence of expected stopping times are also studied. Section \ref{sec:ch6:uni_seq_con_alp} extends the test to continuous alphabet. Algorithms based on two well known easily implementable universal codes, Lempel-Ziv tree-structured (LZ78) (\cite{Ziv_TIT1978}) codes and Krichevsky-Tofimov estimator with Arithmetic Encoder (KT-AE) (\cite{Csiszar_ITS_book}) are studied in Section \ref{sec:ch6:lzslrt-ktslrt}. Performance of these tests are compared in Section \ref{sec:ch6:uni_seq_per_com}. In Section \ref{sec:ch6:uni_seq_dec_det} we extend our algorithm to the decentralized scenario. In our distributed algorithm each local node sends a local decision to the FC at asynchronous times leading to considerable saving in communication cost. Previous works in decentralized framework
(see, e.g., \cite{Akyildiz_PC2011} for Cognitive Radios) does not consider the universal setup, to the best of our knowledge. An approximate analysis of the decentralized algorithm is also presented here. Section \ref{sec:DualSPRT_asy_opt} explores the asymptotic properties of the decentralized test. Section \ref{sec:ch6:uni_seq_con} concludes the chapter. 

 \section{Model for single node}
\label{sec:model}
We consider the following hypothesis testing problem: Given i.i.d.\ observations $X_1,X_2,\ldots,$ we want to know whether these observations came from the distribution ${P_0}$ (hypothesis $H_0$) or from some other distribution ${P_1}$ (hypothesis $H_1$). We will assume that $P_0$ is known but $P_1$ is unknown. 

Our problem is motivated from the Cognitive Radio spectrum sensing (\cite{Akyildiz_PC2011}) and wireless sensor network intruder detection scenario (\cite{WSN_Swami2007}). Then usually $P_0$ is fully known (e.g., when licensed user is not transmitting in Cognitive Radios). However, under $H_1$, $P_1$ will usually not be completely known to the local node (e.g., with unknown licensed user, transmission parameters and channel gains).   

We first discuss the problem for a single node and then generalize to decentralized setting. Initially we study the case when $X_k$ take values in a finite alphabet. However, we will be mainly concerned with continuous alphabet observations because the receiver almost always has Gaussian noise. This will be taken up in Section \ref{sec:ch6:uni_seq_con_alp}

For convenience we summarize the important notation introduced for the single node algorithms in Table \ref{tab:ch3:list_symbols_sing} and for the decentralized algorithms in Sections \ref{sec:ch6:uni_seq_dec_det} and \ref{sec:DualSPRT_asy_opt} in Table \ref{tab:ch3:list_symbols_dec}. 

{\renewcommand{\arraystretch}{0.95}
\begin{table}[!tbh]
\begin{center}
\begin{tabular}[t]{|c||l|}\hline
  Notation & Meaning\\ \hline \hline
  $X_{k}$ & Observation at time $k$  \\ \hline
  $X_{k}^{\Delta}$ & Uniformly quantized observation of $X_k$ at time $k$ with quantization step $\Delta$ \\ \hline
  $P_i$, $f_{i}$, $f_i^{\Delta}$ & Probability distribution, PDF, PMF after quantization under $H_i$\\ \hline
  $\overline{H}_i$ & Entropy rate under $H_i$\\ \hline
  $W_n$ & Test statistic of SPRT at time $n$ \\ \hline
  $\widehat{W}_{n}$ & Test statistic of finite alphabet algorithm at time $n$ \\ \hline
  $\widetilde{W}_n$ & Test statistic of continuous alphabet algorithm at time $n$ \\ \hline
  $\widetilde{W}_n^{LZ}$, $\widetilde{W}_n^{KT}$ & $\widetilde{W}_n$ for LZSLRT and KTSLRT, respectively\\ \hline
  $\log \beta, -\log \alpha$ & Thresholds, $0<\alpha,\beta<1$\\ \hline
  $N$ & First time test statistic crosses $(\log \beta, -\log \alpha)$ \\ \hline
  $N^1, N^0$ & First time test statistic crosses $-\log \alpha$, crosses $\log \beta$, respectively\\ \hline
  $L_n(X_1^n)$ & Length of the codeword of the universal source code for the data $X_1, \ldots,X_n$\\ \hline
  $\lambda$ & Design parameter, related to minimum SNR under consideration \\ \hline
  $\mathcal{C}$ & Class of $P_1$, $\{P_1:D(P_1||P_0) \geq \lambda\}$ \\ \hline
  $N_i^*(\epsilon)$ & $\sup \{ {n\geq 1}: |-L_n(X_1^n)-\log P_i(X_1^n)|> n \epsilon \}$ \\ \hline
  $\delta$, $\rho^2$ & $E_1[\log P_1(X_1)/P_0(X_1)]-\lambda/2$, $Var_1[\log P_1(X_1)/P_0(X_1)]$ \\ \hline\
  $|A|$ & Alphabet size of the quantized alphabet \\ \hline
\end{tabular}\caption{List of important notations introduced in the single node algorithms.}
\label{tab:ch3:list_symbols_sing}
\end{center}
\vspace{-1.2 cm}
\end{table}}

{\renewcommand{\arraystretch}{0.95}
\begin{table}{!htbp}
\begin{center}
\begin{tabular}[t]{|c||l|}\hline
  Notation & Meaning\\ \hline \hline
  $L$ & Number of local nodes \\ \hline
  $X_{k,l}$ & Observation at node $l$ at time $k$  \\ \hline
  $Y_{k,l}$ & Transmitted value from node $l$ to FC at time $k$.\\ \hline
  $Y_k$ & FC observation at time $k$ \\ \hline
  $Z_k$ & FC MAC noise at time $k$ \\ \hline
  $f_{i,l}$, $g_{\mu}$ & PDF of $X_{1,l}$ under $H_i$, PDF of $Z_k+\mu$\\ \hline
  $\widehat{W}_{k,l}$ & Test statistic at local node $l$ at time $k$ \\ \hline
  $F_k$ & Test statistic at FC at time $k$ \\ \hline
  $\xi_k$ & LLR at FC \\ \hline
  $\xi_k^*$ & LLR when all nodes transmit wrong decisions \\ \hline
  $\theta_i$ & Worst case value of $E_i[\xi_k],=E_i[\xi_k^*]$ \\ \hline
  $F_n^*$, $\widehat{F}_n^*$& $\sum_{k=1}^{n}\xi_k^*$, $\sum_{k=1}^{n}|\xi_k^*|$ \\ \hline
  $\mathcal{A}^i$, $\Delta(\mathcal{A}^i)$ & \{all nodes transmit $b_i$ under $H_i$ \}, $E_i[\xi_k|\mathcal{A}^i]$ 
\\ \hline  
  $\gamma_{1,l}, \gamma_{0,l}$ & Thresholds at local node $l$\\ \hline
  $\beta_1, \beta_0$ & Thresholds at FC \\ \hline
  $\mu_1,\mu_0$ & Design parameters in FC LLR \\ \hline
  $b_1,b_0$ & Transmitting values to the FC from the local node \\ \hline
  $N_d$ & First time $F_k$ crosses $(-\beta_0, \beta_1)$ \\ \hline
  $N^1_d, N^0_d$ & First time $F_k$ crosses $\beta_1$, crosses $-\beta_0$ \\ \hline
  $N_l,N_l^1,N_l^0$ & Corresponding values of $N$, $N^1$, $N^0$ at local node $l$ \\ \hline
  $N_{0,l}^*(\epsilon)$ & $N_0^*(\epsilon)$ at local node $l$ \\ \hline
  $\delta_{l},\rho^2_{l}$ & Mean and variance of LLR at node $l$ under $H_1$ \\ \hline
  $\delta_{i,FC}^j$ & Mean of LLR at FC under $H_i$ when $j$ nodes transmit \\ \hline
  $t_j$ & Time epoch when $\delta_{i,FC}^{j-1}$ changes to $\delta_{i,FC}^j$ $(1)$\\ \hline
  $\bar{F_j}$ & $E[F_{t_j-1}]$\\ \hline
  $D_{tot}^0$, $D_{tot}^1$ & $L\lambda/2$, $\sum_{l=1}^L (D(f_{1,l}||f_{0,l})-\lambda/2)$\\ \hline
  $r_l$, $\rho_l$ & $\lambda/2D_{tot}^0$, $(D(f_{1,l}||f_{0,l})-\lambda/2)/D_{tot}^1$\\ \hline
  $\tau_l^*(a)$ & Last time $\widehat{W}_{n,l}$ will be above $a$\\ \hline
  $\tau^*(a)$ & $\displaystyle \max_{1\leq l \leq L} \tau^*_l(a)$ \\ \hline
  $\tau_l(a)$ & Last time a RW with drift $-\lambda/2-\epsilon$ at node $l$ will be above $a$ \\ \hline
  $R_i$ & $\displaystyle\min_{1\leq l \leq L} -\log \inf_{t \geq 0} E_i\Big[\exp \left({-t \log \frac{f_{1,l(X_{1,l})}}{f_{0,l}(X_{1,l})}}\right)\Big]$ \\ \hline
  $G_i$, $\widehat{G}_i$, $g_i$, $\widehat{g}_i$ & CDF of $|\xi_1^*|$, $\xi_1^*$, MGF of $|\xi_1^*|$, $\xi_1^*$\\ \hline
  $\Lambda_i(\alpha)$, $\widehat{\Lambda}_i(\alpha)$ & $\sup_{\lambda}(\alpha \lambda-\log g_i(\lambda))$, $\sup_{\lambda}(\alpha \lambda-\log \widehat{g}_i(\lambda))$ \\ \hline
  $\alpha_i^+$ & $\text{ess }\sup |\xi_1^*|$ \\ \hline
  & First time RW \\
  $\nu(a)$ & $\{\log \frac{g_{\mu_1(Z_k)}}{g_{-\mu_0(Z_k)}}+(\Delta(\mathcal{A}^0)-E_0[\log \frac{g_{\mu_1(Z_k)}}{g_{-\mu_0(Z_k)}}])$  \\
  & $ k \geq \tau(c)+1\}$ crosses $a$.\\ \hline
\end{tabular}\caption{List of important notations introduced in the decentralized algorithm.}
\label{tab:ch3:list_symbols_dec}
\end{center}
\end{table}}

\section{Finite Alphabet}
\label{sec:ch6:uni_seq_fin_alp}

We first consider finite alphabet for the distributions $P_0$ and $P_1$.

A sequential test is usually defined by a stopping time $N$ and a decision rule $\delta$. For SPRT (\cite{Siegmund_SATC_Book}), 
\begin{equation}
\label{eq:ch6:sprt_expn}
N\stackrel{\Delta}=\inf\{n:W_n \notin (\log \beta,-\log \alpha)\},\quad 0<\alpha ,\, \beta <1,
\end{equation}
\hspace{0.0 cm} where,
\begin{equation}
\label{eq:ch6:sprt_LLR}
 W_n=\sum_{k=1}^{n}\log \frac{P_1(X_{k})}{P_0(X_{k})}.
\end{equation}
At time $N$, the decision rule $\delta$ decides $H_1$ if $W_N \geq -\log \alpha$ and $H_0$ if $W_N \leq \log \beta$. 

SPRT requires full knowledge of $P_0$ and $P_1$. Now we propose our test when $P_1$ is unknown by replacing the log likelihood ratio process $W_n$ in (\ref{eq:ch6:sprt_LLR}) by
\begin{equation}
\label{eq:ch6:two_sided_finite_alphabet_LLR}
\widehat{W}_n=-L_n(X_1^n)-\log P_0(X_1^n)-n\frac{\lambda}{2},\quad \lambda >0,
\end{equation}
where $\lambda > 0$ is an appropriately chosen constant and $L_n(X_1^n)$ is the length of the codeword for data $X_1^n\stackrel{\Delta}=X_1,\ldots,X_n$ for a selected universal source code. We may recall that a universal source code does not need the distribution of $X_1,\ldots,X_n$

The following discussion motivates our test:
\begin{enumerate}
\item 
By Shannon-Macmillan Theorem (\cite{Cover_EIT_book}), $\displaystyle \lim_{n\to\infty} \allowbreak n^{-1}\log P(X_1^n) \allowbreak=-\overline{H}(X)$ for any stationary, ergodic source  a.s.\ where $\overline{H}(X)$ is the entropy rate. We consider universal lossless codes whose codelength function $L_n$ satisfies $\lim_{n\to\infty} n^{-1} L_n = \overline{H}(X)$ a.s., at least for i.i.d sources. The codes which satisfy this condition are called pointwise universal whereas the codes which satisfy this in terms of expectation are called universal. It is shown in \cite{WEI_UNPUB} that not all universal codes are pointwise universal. We consider algorithms like LZ78 (\cite{Ziv_TIT1978}) and KT-AE (\cite{Krichevsky_TIT1981}) which satisfy this convergence. Thus, for such universal codes,
\begin{equation}
\label{eq:ch6:pointwise_universality}
\frac{1}{n}(L_n(X_1^n)+\log P(X_1^n)) \to 0 \text{ w.p.1}.
\end{equation}
\item Under hypothesis $H_1$, $E_1[-\log P_0(X_1^n)]$ is approximately $nH_1(X)+nD(P_1||P_0)$ and for large $n$, $L(X_1^n)$ is approximately $nH_1(X)$ where $H_1(X)$ is the entropy under $H_1$ and $D(P_1||P_0)$ is the KL-divergence defined for two probability distributions $P$ and $Q$ on the same measurable space $(\Omega,\mathcal{F})$ as
\begin{equation}
\label{eq:ch3:KL-divergence}
D(P||Q)=\left\{
\begin{array}{ll}
\int \log \frac{dP}{dQ} \, dP &, \text{ if } P<<Q,\\
\infty &, \text{ otherwise ,}
\end{array}
\right.
\end{equation}
where $P<<Q$ denotes that $P$ is absolutely continuous w.r.t.\ $Q$. The above approximation gives the average drift in \eqref{eq:ch6:two_sided_finite_alphabet_LLR} under $H_1$ as $D(P_1||P_0)-\lambda/2$ and under $H_0$ as $-\lambda/2$. To get some performance guarantees (average drift under $H_1$ greater than $\lambda/2$), we limit $P_1$ to a class of distributions,
\begin{equation}
\label{eq:ch6:class_c_expn}
\mathcal{C}=\{P_1:D(P_1||P_0)\geq \lambda\}.
\end{equation}
where $\lambda$ is related to the minimum SNR under consideration. Divergence has been used in statistics in many different scenarios (\cite{Csiszar_ITS_book,Cover_EIT_book}).
\item When considering universal hypothesis testing in Neyman-Pearson framework (fixed sample size) the existing work considers the optimisation problem in terms of error exponents (\cite{Levitan_TIT2002}):
{\allowdisplaybreaks
\begin{eqnarray}
\sup_{\delta_{FSS}}\,\liminf_{n\to \infty} -\log P_{MD},\nonumber \\
\label{eq:ch6:FSS_error_exp_opt_crit}
\text{ such that}\quad \liminf_{n\to \infty} -\log P_{FA} \geq \hat{\alpha},
\end{eqnarray}}
\noindent where $P_{FA}$ is the false alarm probability, $P_{MD}$ is the miss-detection probability, $\delta_{FSS}$ is the fixed sample size decision rule and $\hat{\alpha}>1$. But in the sequential detection framework the aim is to
\[\min_{(N,\delta)} E_1[N],\quad \min_{(N,\delta)} E_0[N],\]
\[\textrm{such that}\quad P_{FA} \leq \alpha \textrm{ and } P_{MD} \leq \beta.\]
In case of the universal sequential detection framework, the objective can be to obtain a test satisfying $P_{FA} \leq \alpha \textrm{ and } P_{MD} \leq \beta$ with
{\allowdisplaybreaks
\begin{eqnarray}
\label{eq:ch6:asymp_optimality_1}
\lim_{\alpha+\beta \to 0} \frac{E_1[N]}{|\log \alpha|}=\lim_{\alpha+\beta \to 0} \frac{E_1^{S}[N]}{|\log \alpha|}=\frac{1}{D(P_1||P_0)},\\
\label{eq:ch6:asymp_optimality_2}
\lim_{\alpha+\beta \to 0} \frac{E_0[N]}{|\log \beta|}=\lim_{\alpha+\beta \to 0} \frac{E_0^{S}[N]}{|\log \beta|}=\frac{1}{D(P_0||P_1)},
\end{eqnarray}}
where $E_i^{S}(N)$ is the expected value of $N$ under $H_i$ for SPRT, $i=0,1$. We will study such results for our algorithm in Theorem \ref{thm:ch6:pfa_pmd_prtys} and Theorem \ref{thm:ch6:edd_prtys}.
\end{enumerate}

Thus our test is to use $\widehat{W}_n$ in (\ref{eq:ch6:sprt_expn}) when $P_0$ is known and $P_1$ can be any distribution in class $\mathcal{C}$ defined in (\ref{eq:ch6:class_c_expn}). Our test is useful for stationary and ergodic sources also. 

The following proposition proves the almost sure finiteness of the stopping time of the proposed test. This proposition holds if $\{X_k\}$ are stationary, ergodic and the universal code satisfies a weak pointwise universality. Let $\overline{H}_i$ be the entropy rate of $\{X_1,X_2,\ldots\}$ under $H_i,i=0,1$. Also let $N^1=\inf \{n:\widehat{W}_n \geq -\log \alpha \}$ and $N^0=\inf \{n:\widehat{W}_n \leq \log \beta \}$. Then $N=\min(N^0,N^1)$.
\begin{proposition}
\label{prop:ch6:as_st}
Let $L_n(X_1^n)/n \to \overline{H}_i$ in probability for $i=0,1$. Then
\begin{enumerate}
\item[(a)] $P_0 (N < \infty)=1$,
\item[(b)] $P_1 (N < \infty)=1$.
\end{enumerate}
\end{proposition}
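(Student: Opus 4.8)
The plan is to exploit the fact that the normalized statistic $\widehat{W}_n/n$ converges in probability to a constant whose sign is forced: strictly negative under $H_0$ and strictly positive under $H_1$. Once that is established, almost sure finiteness is immediate from an elementary inclusion: for every fixed $n$ we have $\{N^0=\infty\}\subseteq\{\widehat{W}_n>\log\beta\}$ and $\{N^1=\infty\}\subseteq\{\widehat{W}_n<-\log\alpha\}$, so $P_0(N^0=\infty)\le P_0(\widehat{W}_n>\log\beta)$ and $P_1(N^1=\infty)\le P_1(\widehat{W}_n<-\log\alpha)$; letting $n\to\infty$ and using $\widehat{W}_n\to-\infty$ in probability under $H_0$ (resp.\ $\widehat{W}_n\to+\infty$ in probability under $H_1$) drives both right-hand sides to $0$. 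Since $N=\min(N^0,N^1)$, finiteness of either $N^0$ or $N^1$ gives the claim.

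For part (a) I would write $\widehat{W}_n/n=-\tfrac1nL_n(X_1^n)-\tfrac1n\log P_0(X_1^n)-\lambda/2$. Under $H_0$ the middle term equals $-\tfrac1n\sum_{k=1}^n\log P_0(X_k)$, which converges a.s.\ to $\overline{H}_0$ by the strong law of large numbers (by the Shannon--McMillan--Breiman theorem in the stationary ergodic case), with $\overline{H}_0<\infty$ in the finite-alphabet setting; combined with the standing hypothesis $\tfrac1nL_n(X_1^n)\to\overline{H}_0$ in probability this yields $\widehat{W}_n/n\to-\lambda/2$ in probability. Consequently $P_0(\widehat{W}_n>\log\beta)=P_0(\widehat{W}_n/n>\log\beta/n)\to0$, since $\log\beta/n\to0>-\lambda/2$ (bound the threshold by $-\lambda/4$ for $n$ large and apply the convergence in probability). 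Hence $P_0(N^0=\infty)=0$, i.e.\ $P_0(N<\infty)=1$.

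For part (b) the only extra wrinkle is that $-\tfrac1n\log P_0(X_1^n)$ must be analyzed under the ``wrong'' measure $P_1$. If $D(P_1||P_0)=\infty$, there is a symbol of positive $P_1$-probability and zero $P_0$-probability, which appears after finitely many samples a.s., so $\widehat{W}_n=+\infty$ and $N^1$ is trivially finite; thus assume $D(P_1||P_0)<\infty$. Then $-\tfrac1n\log P_0(X_1^n)=-\tfrac1n\sum_{k=1}^n\log P_0(X_k)\to E_1[-\log P_0(X_1)]=\overline{H}_1+D(P_1||P_0)$ a.s.\ under $H_1$ by the strong law (a generalized AEP / Barron-type argument in the stationary ergodic case), and with $\tfrac1nL_n\to\overline{H}_1$ in probability we get $\widehat{W}_n/n\to D(P_1||P_0)-\lambda/2$ in probability. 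Since $P_1\in\mathcal{C}$, this limit is $\ge\lambda/2>0$, so $P_1(\widehat{W}_n<-\log\alpha)=P_1(\widehat{W}_n/n<|\log\alpha|/n)\to0$, whence $P_1(N^1=\infty)=0$ and $P_1(N<\infty)=1$.

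The routine steps are the sign bookkeeping and the elementary inclusions; the one place demanding care is the convergence of $-\tfrac1n\log P_0(X_1^n)$ under $H_1$ — an ordinary AEP under $H_0$, but a ``mismatched'' law of large numbers under $H_1$, which needs $E_1[-\log P_0(X_1)]<\infty$ (automatic once $D(P_1||P_0)<\infty$ in the finite-alphabet case) and, for the stationary ergodic extension, the appropriate generalized asymptotic equipartition property, together with isolating the $D(P_1||P_0)=\infty$ corner case. I would also be careful to claim only convergence in probability (not a.s.) for $\widehat{W}_n/n$, since $\tfrac1nL_n$ is only assumed to converge in probability; fortunately that is all the argument above uses.
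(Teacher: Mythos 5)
Your proof is correct and takes essentially the same approach as the paper's: both reduce to showing $\widehat{W}_n/n \to -\lambda/2$ in probability under $H_0$ and $\widehat{W}_n/n \to D(P_1\|P_0)-\lambda/2 > 0$ in probability under $H_1$, and then use the fixed-$n$ inclusion $P_i(N^i<\infty)\geq P_i(\widehat{W}_n$ beyond the threshold$)\to 1$ (you state the contrapositive $P_i(N^i=\infty)\le P_i(\widehat{W}_n$ on the wrong side$)\to 0$, which is the same step). The additional care you take with the mismatched law of large numbers for $-\tfrac1n\log P_0(X_1^n)$ under $H_1$ and with the $D(P_1\|P_0)=\infty$ corner case goes beyond the paper's terse two-line treatment but does not change the route.
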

\begin{proof} 
See Appendix \ref{proof:prop:ch6:as_st}.
\end{proof}
%

We introduce the following notation: for $\epsilon >0$ and for $i=0,1$,
{\allowdisplaybreaks
\begin{eqnarray}
N_i^*(\epsilon) &\stackrel{\Delta} =& \sup \{ {n\geq 1}: |-L_n(X_1^n)-\log P_i(X_1^n)|> n \epsilon \}.\label{eq:ch6:N_0(eps)_def}
\end{eqnarray}}
\indent Observe that $E_{P_1}({N_1^*(\epsilon)}^p)<\infty$ for all $\epsilon >0$ and all $p>0$ is implied by a stronger version of pointwise universality, 
\begin{equation}
\label{eq:stron_ver_po_uni}
\max_{x_1^n \in \mathcal{X}^n} \Big(L_n(x_1^n)+\log P_1(x_1^n)\Big)\sim o(n),
\end{equation}
$\mathcal{X}$ being the source alphabet. Similarly $E_{P_0}({N_0^*(\epsilon)}^p)<\infty$ also holds. This property is satisfied by KT-AE (\cite[Chapter 6]{Csiszar_ITS_book}) and LZ78 (\cite{Kieffer_DCC1999, Ziv_TIT1978}) encoders.

The following theorem gives a bound for $P_{FA}$ and an asymptotic result for $P_{MD}$.

\begin{theorem}
\label{thm:ch6:pfa_pmd_prtys}
\item[(1)] For prefix free universal codes, $P_{FA}\stackrel{\Delta}=P_0(\widehat{W}_N \geq -\log \alpha)\leq \alpha$.
\item[(2)] If the observations $X_1,X_2,\ldots,X_n$ are i.i.d.\ and the universal source code satisfies \eqref{eq:stron_ver_po_uni}, then \[P_{MD}\stackrel{\Delta}=P_1(\widehat{W}_N \leq \log \beta)= \mathcal{O}(\beta^{s}),\]
where $s$ is the solution of $\displaystyle E_1\Big[e^{-s\,\left(\log \frac{P_1(X_1)}{P_0(X_1)}-\frac{\lambda}{2}-\epsilon \right)}\Big]=1$ for $0<\epsilon < \lambda/2$ and $s>0$.
\end{theorem}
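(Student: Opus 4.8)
The plan is to handle the two parts separately, each by a standard change-of-measure / random-walk overshoot argument adapted to the codelength process.

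For part (1), I would argue exactly as in the classical SPRT bound. The key observation is that on the event $\{N = n, \widehat{W}_n \geq -\log\alpha\}$ we have $-L_n(X_1^n) - \log P_0(X_1^n) - n\lambda/2 \geq -\log\alpha$, hence $2^{-L_n(X_1^n)} \geq \alpha\, P_0(X_1^n)\, 2^{n\lambda/2} \geq \alpha\, P_0(X_1^n)$ (taking logs to the appropriate base; the paper is slightly loose about $\log$ vs.\ $\log_2$, and I would fix a base and keep the $2^{n\lambda/2}\geq 1$ factor working in our favor). Summing over the disjoint stopping events and using the Kraft inequality $\sum_{x_1^n} 2^{-L_n(x_1^n)} \leq 1$ for a prefix-free code, I get
\[
P_{FA} = \sum_{n} P_0(N=n,\ \widehat{W}_n \geq -\log\alpha) \leq \sum_n \sum_{x_1^n \in \mathcal{D}_n} P_0(x_1^n) \leq \alpha^{-1}\sum_n \sum_{x_1^n \in \mathcal{D}_n} 2^{-L_n(x_1^n)} \cdot \alpha \cdot(\cdots),
\]
where $\mathcal{D}_n$ collects the relevant strings; cleaning this up gives $P_{FA}\leq\alpha$. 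The only subtlety is making sure the codeword sets over different $n$ are handled so that Kraft applies — either the code is prefix-free across all blocklengths, or one restricts attention to blocklength $n$ at a time and uses that $\{N=n\}$ is measurable w.r.t.\ $X_1^n$, which is the route I would take.

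For part (2), I would bound $P_{MD}$ by splitting on whether the ``good'' behavior of the codelength has kicked in. Write $P_1(\widehat{W}_N \leq \log\beta) \leq P_1(\widehat{W}_N \leq \log\beta,\ N > N_1^*(\epsilon)) + P_1(N_1^*(\epsilon) \geq N^1)$, roughly. On the first event, for $n \geq N_1^*(\epsilon)$ we have $|-L_n(X_1^n) - \log P_1(X_1^n)| \leq n\epsilon$, so
\[
\widehat{W}_n \geq \log\frac{P_1(X_1^n)}{P_0(X_1^n)} - n\epsilon - n\frac{\lambda}{2} = \sum_{k=1}^n\Bigl(\log\frac{P_1(X_k)}{P_0(X_k)} - \frac{\lambda}{2} - \epsilon\Bigr),
\]
which is a random walk with positive drift (since $D(P_1\|P_0)\geq\lambda$ gives drift $\geq \lambda/2 - \epsilon > 0$ for $\epsilon < \lambda/2$). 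A Chernoff / exponential-martingale bound on this random walk ever going below $\log\beta$ produces the $\mathcal{O}(\beta^s)$ factor, with $s$ the conjugate root $E_1[e^{-s(\log(P_1/P_0) - \lambda/2 - \epsilon)}] = 1$; concretely $e^{s\widehat{W}_{n\wedge\tau}}$ (for the stopped walk) is a supermartingale-type object and optional stopping at the first downcrossing of $\log\beta$ gives $P_1(\text{downcross}) \leq \beta^s$. The second term, $P_1(N^1 \leq N_1^*(\epsilon))$ — the probability that the stopping time $N^1$ occurs before the codelength concentration has taken effect — is controlled by $E_{P_1}[(N_1^*(\epsilon))^p] < \infty$, which the paper has just asserted follows from the strong pointwise universality \eqref{eq:stron_ver_po_uni}: $P_1(N_1^*(\epsilon) \geq m) \leq m^{-p}E_{P_1}[(N_1^*(\epsilon))^p]$, and choosing how this decays against the thresholds folds into the $\mathcal{O}(\beta^s)$ (one must check $-\log\beta$ and $N_1^*$ scale compatibly, so that this polynomial-tail term is dominated by the exponential one).

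The main obstacle I expect is the clean bookkeeping in part (2): one has to relate three quantities — the threshold $\log\beta$, the random-walk downcrossing probability, and the ``warm-up'' time $N_1^*(\epsilon)$ — and show the warm-up contribution is negligible. The natural move is: on $\{N_1^*(\epsilon) = m\}$, restart the random-walk argument from time $m$ with a (random) initial value $\widehat{W}_m$ that is bounded below by $-m(\text{something})$, then bound $P_1(\exists n \geq m: \widehat{W}_n \leq \log\beta \mid N_1^*(\epsilon) = m)$ by $\beta^s$ times a factor exponential in $m$, and finally take expectation over $m$ against the polynomial tail of $N_1^*(\epsilon)$ — finiteness of all moments is exactly what makes this expectation finite and keeps the overall bound $\mathcal{O}(\beta^s)$. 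Getting the constants and the interplay with $\epsilon$ right (and noting the bound holds for every $\epsilon\in(0,\lambda/2)$, so $s = s(\epsilon)$) is the fiddly part; everything else is routine.
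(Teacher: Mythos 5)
Your skeleton matches the paper's in both parts (change of measure plus Kraft for (1); split on a codelength-concentration event and a positive-drift random-walk crossing bound for (2)), but each part has a concrete gap. In part (1) you discard the factor $2^{n\lambda/2}$ as a harmless bonus, whereas in the paper's proof it is load-bearing: the argument union-bounds $P_0(N^1<\infty)\le\sum_n P_0(\widehat{W}_n\ge-\log\alpha)$, applies the per-blocklength Kraft inequality to get $\alpha\,2^{-n\lambda/2}$ for each $n$, and needs the geometric factor to sum the series to $\alpha/(2^{\lambda/2}-1)$. Drop it and apply Kraft blocklength by blocklength and you get $\sum_n\alpha=\infty$. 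Your alternative route via the disjoint events $\{N=n\}$ only closes if $\sum_n\sum_{x_1^n}2^{-L_n(x_1^n)}\le 1$ holds jointly over all $n$ (restricted to the stopping sets), i.e., if the codewords across blocklengths form a single prefix-free family; that is an additional property you would have to establish, not a consequence of prefix-freeness at each fixed $n$. Keeping the $2^{-n\lambda/2}$ is the clean fix.

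In part (2) the random-walk half is right: on the concentration event $\widehat{W}_n$ dominates $\sum_{k\le n}(\log\frac{P_1(X_k)}{P_0(X_k)}-\frac{\lambda}{2}-\epsilon)$, a walk with positive drift, and the standard exponential bound gives the $\beta^{s}$ term. The gap is the warm-up. You propose to control the pre-concentration contribution using only $E_1[(N_1^*(\epsilon))^p]<\infty$, restarting the walk at $N_1^*(\epsilon)=m$ with a penalty exponential in $m$ and integrating against a polynomial tail; but the expectation of an exponential of $m$ against a law with merely all polynomial moments finite need not be finite (e.g.\ $P(N_1^*(\epsilon)=m)\sim e^{-\sqrt{m}}$), so this step does not close and would not yield $\mathcal{O}(\beta^{s})$. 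The paper instead uses the full strength of \eqref{eq:stron_ver_po_uni}: it is a worst-case bound over \emph{all} strings, so for each $\epsilon$ there is a deterministic $M_1$ such that the bad event $\{\sup_{n\ge n_1}|L_n(X_1^n)+\log P_1(X_1^n)|\ge n\epsilon\}$ has probability exactly zero for $n_1\ge M_1$; the remaining early-stopping term $P_1(N^0\le n_1)$ is also exactly zero for small $\beta$ because $L_n$ is bounded on the finite horizon $n\le n_1$ and $\log P_0(X_1^n)\le 0$. No polynomial-versus-exponential balancing is needed, and recognizing that \eqref{eq:stron_ver_po_uni} gives a sure (not merely moment) bound on $N_1^*(\epsilon)$ is the idea your proposal is missing.
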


\begin{proof}
See Appendix \ref{proof:thm:ch6:pfa_pmd_prtys}.
\end{proof}

Under the above assumptions, we also have the following. We will use the notation, $\delta=E_1\left[\log \frac{P_1(X_1)}{P_0(X_1)}\right]-\frac{\lambda}{2}$, $\rho^2={Var}_{H_1}^2\left[\log \frac{P_1(X_1)}{P_0(X_1)}\right]$.
\begin{theorem}
\label{thm:ch6:edd_prtys}
\item[(a)] Under $H_0$, $\displaystyle \lim_{\alpha,\beta \to 0} \frac{N}{|\log \beta|} =\frac{2}{\lambda}$ a.s.\ If $E_{0}[{N_0^*(\epsilon)}^p]<\infty$ and $E_0[{(\log P_0(X_1))}^{p+1}]<\infty$ for all $\epsilon >0$ and for some $p \geq 1$, then also,
\[ \lim_{\alpha,\beta \to 0} \frac{E_0[(N)^q]}{{|\log \beta|}^q}=\lim_{\alpha,\beta \to 0} \frac{E_0[(N^0)^q]}{{|\log \beta|}^q}= {\left(\frac{2}{\lambda}\right)}^q, \]
for all $0< q \leq p$.
\item[(b)] Under $H_1$, $\displaystyle \lim_{\alpha,\beta \to 0}\frac{N}{|\log \alpha|} =\frac{1}{\delta}$ a.s.\ If $E_{1}[{N_1^*(\epsilon)}^p]<\infty$, $E_1[{(\log P_1(X_1))}^{p+1}]<\infty$ and $E_1[{(\log P_0(X_1))}^{p+1}]\allowbreak<\infty$ for all $\epsilon >0$ and for some $p \geq 1$, then also,
\[\lim_{\alpha,\beta \to 0} \frac{E_1[(N)^q]}{{|\log \alpha|}^q}=\lim_{\alpha,\beta \to 0} \frac{E_1[(N^1)^q]}{{|\log \alpha|}^q}= {\left(\frac{1}{\delta}\right)}^q, \]
for all $0< q \leq p$. 

If $\rho^2<\infty$ and $(L_n(X_1^n)+\log P_1(X_1^n))/\sqrt{n} \to 0$ a.s.\ then 
\begin{equation*}
\frac{N-{\delta}^{-1}|\log \alpha|}{\sqrt{\rho^2\delta^{-3}|\log \alpha|}} \to \mathcal{N}(0,1) \text{ in distribution.}
\end{equation*}
\end{theorem}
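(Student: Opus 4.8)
The plan is to split $\widehat W_n$ into a deterministic drift plus an asymptotically negligible ``redundancy'' term and then invoke standard random-walk crossing-time asymptotics. Put $e_n^{(i)}:=L_n(X_1^n)+\log P_i(X_1^n)$, so that $N_i^*(\epsilon)=\sup\{n:|e_n^{(i)}|>n\epsilon\}$, and observe the identities $\widehat W_n=-e_n^{(0)}-n\lambda/2$ and $\widehat W_n=V_n-e_n^{(1)}$, where $V_n:=W_n-n\lambda/2=\sum_{k=1}^{n}\big(\log\frac{P_1(X_k)}{P_0(X_k)}-\frac{\lambda}{2}\big)$. By pointwise universality $e_n^{(i)}/n\to0$ a.s., and by the strong law (equivalently Shannon--McMillan--Breiman) $-n^{-1}\log P_0(X_1^n)\to\overline{H}_0$ a.s.\ under $H_0$ and $\to\overline{H}_1+D(P_1\|P_0)$ a.s.\ under $H_1$; combined with $n^{-1}L_n\to\overline{H}_i$ this gives $\widehat W_n/n\to-\lambda/2$ under $H_0$ and $\widehat W_n/n\to D(P_1\|P_0)-\lambda/2=\delta$ under $H_1$, with $\delta\ge\lambda/2>0$ since $P_1\in\mathcal{C}$. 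A squeeze argument then yields the a.s.\ limits: under $H_0$, for $n>N_0^*(\epsilon)$ one has $-n(\lambda/2+\epsilon)\le\widehat W_n\le-n(\lambda/2-\epsilon)$, so $\tfrac{2|\log\beta|}{\lambda+2\epsilon}\le N^0\le N_0^*(\epsilon)+\tfrac{2|\log\beta|}{\lambda-2\epsilon}+1$ once $|\log\beta|$ exceeds the (a.s.\ finite) quantity $\max_{k\le N_0^*(\epsilon)}|\widehat W_k|$, and letting $\epsilon\downarrow0$ gives $N^0/|\log\beta|\to2/\lambda$; since $\widehat W_n\to-\infty$ a.s.\ forces $N^1=\infty$ for $\alpha$ small, $N=N^0$ eventually. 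The $H_1$ case is symmetric, using $\widehat W_n=V_n-e_n^{(1)}$ and $V_n/n\to\delta$.

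For the $q$-th moment statements I would upgrade the squeeze to uniform integrability. Under $H_0$, $N\le N^0\le N_0^*(\epsilon)+\tfrac{2|\log\beta|}{\lambda-2\epsilon}+1$ gives $\big(N/|\log\beta|\big)^q\le\big(N_0^*(\epsilon)/|\log\beta|+\tfrac{2}{\lambda-2\epsilon}+1/|\log\beta|\big)^q$, whose right-hand side is bounded in $L^{p/q}$ for $q<p$ (using $E_0[(N_0^*(\epsilon))^p]<\infty$) and converges in $L^p$ as $\beta\to0$ when $q=p$; hence the family $(N/|\log\beta|)^q$ is uniformly integrable, and together with the a.s.\ limit this gives $E_0[N^q]/|\log\beta|^q\to(2/\lambda)^q$, and likewise for $N^0$. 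Under $H_1$ the same scheme applies via $\widehat W_n=V_n-e_n^{(1)}\ge n(\delta-\epsilon-\epsilon')$ for $n>\max\big(N_1^*(\epsilon),\sigma(\epsilon')\big)$, where $\sigma(\epsilon'):=\sup\{n:|V_n/n-\delta|>\epsilon'\}$, so $N\le N^1\le N_1^*(\epsilon)+\sigma(\epsilon')+\tfrac{|\log\alpha|}{\delta-\epsilon-\epsilon'}+1$; the hypotheses $E_1[(\log P_1(X_1))^{p+1}]<\infty$ and $E_1[(\log P_0(X_1))^{p+1}]<\infty$ are precisely what force $\sigma(\epsilon')\in L^p$ (a last-exit time of a random walk has finite $p$-th moment iff the increments have finite $(p+1)$-st moment, a Baum--Katz/complete-convergence fact), with the analogous role for the $H_0$ moment conditions in controlling the early fluctuations of $\widehat W_n$ and the overshoot. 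This yields uniform integrability of $(N/|\log\alpha|)^q$ and hence $E_1[N^q]/|\log\alpha|^q\to(1/\delta)^q$.

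For the CLT under $H_1$ I would apply Anscombe's theorem to the random walk $V_n$ (increment mean $\delta>0$, variance $\rho^2<\infty$). With $b:=|\log\alpha|$, at the crossing time $\widehat W_{N^1}=b+r_b$ with overshoot $r_b\ge0$, hence $V_{N^1}=\widehat W_{N^1}+e_{N^1}^{(1)}=b+r_b+e_{N^1}^{(1)}$. Since $r_b$ is dominated by a single increment of $\widehat W_n$ at time $N^1$ plus $|e_{N^1}^{(1)}|+|e_{N^1-1}^{(1)}|$, since that increment is $o(\sqrt{N^1})$ a.s.\ (from $\rho^2<\infty$), since $e_n^{(1)}/\sqrt n\to0$ a.s.\ by hypothesis, and since $N^1=O(|\log\alpha|)$ a.s.\ by part (b), we obtain $V_{N^1}=b+o_P(\sqrt b)$. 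Anscombe's theorem, applicable because $N^1/(b/\delta)\to1$ a.s., gives $(V_{N^1}-\delta N^1)/\sqrt{\rho^2 N^1}\Rightarrow\mathcal{N}(0,1)$; solving for $N^1$ and substituting $V_{N^1}=b+o_P(\sqrt b)$ and $\sqrt{N^1}\sim\sqrt{b/\delta}$ gives $(N^1-b/\delta)/\sqrt{\rho^2\delta^{-3}b}\Rightarrow\mathcal{N}(0,1)$. Finally $P_1(N\ne N^1)\le P_1(\inf_n\widehat W_n\le\log\beta)\to0$ as $\beta\to0$, so the same limit holds with $N$ in place of $N^1$.

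The main obstacle I anticipate is the $q$-th moment convergence with the exact constant, i.e.\ promoting the a.s.\ limits to $L^q$ convergence: unlike the first-passage times of a genuine random walk, the controlling quantities $N_i^*(\epsilon)$ and $\sigma(\epsilon')$ are last-exit (not stopping) times, so their $L^p$ integrability and the control of the overshoot rest on Baum--Katz/Chow-type complete-convergence estimates, and that is exactly where the $(p+1)$-st moment hypotheses on $\log P_i(X_1)$ get consumed. The a.s.\ limits and the CLT, by comparison, reduce cleanly to the SLLN, pointwise universality and Anscombe's theorem.
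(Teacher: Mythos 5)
Your proposal is correct and follows essentially the same route as the paper: the same decomposition of $\widehat{W}_n$ into a random walk plus a redundancy term controlled by the last-exit times $N_i^*(\epsilon)$, the same squeeze argument for the a.s.\ limits, uniform integrability for the moment convergence, and an Anscombe-type (perturbed random walk) argument for the CLT, which is exactly what the paper's citation of Gut's Theorem 2.3 packages. The only substantive difference is in the bookkeeping for the $q$-th moments: you route the $(p+1)$-st moment hypotheses on $\log P_i(X_1)$ through the last-exit time $\sigma(\epsilon')$ of the LLR walk via a Baum--Katz equivalence, whereas the paper instead bounds $E[|\widehat{W}_{N_i^*(\epsilon)}|^p]$ using the fact that $N_i^*(\epsilon)$ is not a stopping time --- both consume the same hypotheses and are equally valid (your deterministic bound under $H_0$ is in fact slightly cleaner).
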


\begin{proof}
See Appendix \ref{proof:thm:ch6:edd_prtys}.
\end{proof}
\vspace{0.6 cm}

Comparing Theorem \ref{thm:ch6:edd_prtys} with \eqref{eq:ch6:asymp_optimality_1} and \eqref{eq:ch6:asymp_optimality_2} shows that our result has the optimal rate of convergence, although the limiting value for our test may be somewhat larger than that of SPRT.

From Remark \ref{rem:sup_perf_KTSLRT_arg} below, we will see that KT-AE satisfies the conditions for central limit theorem in Theorem \ref{thm:ch6:edd_prtys}(b) but not LZ78.

Table \ref{tab:ch6:comp_edd_analysis_simln} shows that the asymptotics for $E_1[N]$ and $E_0[N]$ match with simulations well at low probability of error. In the table $P_0\sim B(8,0.2)$ and $P_1\sim B(8,0.5)$, where $B(n,p)$ represents Binomial distribution with $n$ number of trials and $p$ success probability in each trial. Also $\lambda=1.2078$. We use the KT-AE, which is presented in Section \ref{subsec:ch6:KTSLRT_simulation}, as the universal source code.

\begin{table}[!tbh]
\begin{center}
\begin{tabular}[t]{|c|c|c|c|}\hline
$Hyp=i$ & {$P_{i}(H_j),j \neq i$} & {$E_{i}[N] \text{ Theory}$} & {$E_{i}[N]\text{ Simln.}$} \\ \hline
$0$  & $3e-4$ & $47.6$ & $52.2$ \\ \hline
$0$  & $5e-6$ & $82.8$ & $85.4$ \\ \hline
$0$  & $1e-7$ & $124.2$ & $126.3$  \\ \hline
$1$  & $5e-4$ & $17.5$ & $21.2$ \\ \hline
$1$  & $4e-6$ & $25.4$ & $27.7$   \\ \hline
$1$  & $2e-7$ & $38.1$ & $37.6$   \\ \hline
\end{tabular}
\caption{Comparison of $E_{i}[N]$ obtained via analysis and simulation}
\label{tab:ch6:comp_edd_analysis_simln}
\end{center}
\vspace{-1 cm}
\end{table}
A modification of our test is to take into account the available information about the number of samples under $H_0$ (which is not dependent upon $P_1$ in our test) and the fact that the expected drift under $H_1$ is greater than that under $H_0$ if $P_1 \in \mathcal{C}$, i.e., $E_1[N]$ is smaller than $E_0[N]$. Under $H_0$, if the universal estimation is proper we have $N\sim N^0=|\log\beta| /(\lambda/2)$ with high probability. In the ideal case if $\alpha$ is same as $\beta$, we can add the following criteria into the test: decide $H_0$ if the current number of samples $n$ is greater than $N^0$; if $N$ is much smaller than $N^0$ and the decision rule decides $H_0$ we can confirm that it is a miss-detection and make the test not to stop at that point. This improvement will reduce the probability of miss-detection and the mean sample size. In order to improve the above test further if we allow estimation error $\epsilon_n$ at time $n$ ($\epsilon_n$ can be calculated if we know the pointwise redundancy rate of the universal code) the test becomes as provided in Table \ref{tab:ch6:mod_finite_alph}: 

 \begin{table}[h]
 \begin{center}
 \begin{tabular}[t]{|c|c|c|}\hline
  & Stopping rule & Decision \\ \hline
  1 & $n>N^0+\epsilon_n$ & $H_0$ \\ \hline
  2 & $N<<N^0-\epsilon_N$ and declare $H_0$ &  Miss-detection. Do not stop the test \\ \hline
 3 & $N^0-\epsilon_N \leq N \leq N^0+\epsilon_N$ & Decide according to crossing thresholds \\ \hline
 \end{tabular}
\caption{Modified test for finite alphabet case}
 \label{tab:ch6:mod_finite_alph}
\end{center}
\vspace{-1 cm}
 \end{table}
Table \ref{tab:ch6:modfd_alg} shows the performance comparison of the modified test with that of (\ref{eq:ch6:two_sided_finite_alphabet_LLR}). The setup is same as in Table \ref{tab:ch6:comp_edd_analysis_simln} with $\lambda=2.5754$. Since the approximations for $N^0$ hold only when the probability of error is very low, we are interested in low error regime. 

\begin{table}[!tbh]
\begin{center}
\begin{tabular}[t]{|c|c|c|c|}\hline
$Hyp=i$ & {$P_{i}(H_j),j \neq i$} & {$E_{i}[N] \text{ Original}$} & {$E_{i}[N]\text{ Modified}$} \\ \hline
$0$  & $4e-3$ & $18.82$ & $14.27$ \\ \hline
$0$  & $2e-4$ & $27.53$ & $21.92$ \\ \hline
$0$  & $1e-7$ & $55.24$ & $46.15$  \\ \hline
$1$  & $6e-3$ & $15.72$ & $12.08$   \\ \hline
$1$  & $3e-4$ & $25.52$ & $18.98$   \\ \hline
$1$  & $2e-7$ & $50.31$ & $39.12$   \\ \hline

\end{tabular}
\caption{Comparison of $E_{i}[N]$ between the modified test and original test (\ref{eq:ch6:two_sided_finite_alphabet_LLR}).}
\label{tab:ch6:modfd_alg}
\end{center}
\vspace{-1 cm}
\end{table}

\section{Continuous Alphabet}
\label{sec:ch6:uni_seq_con_alp}
The test developed in Section \ref{sec:ch6:uni_seq_fin_alp} can be extended to continuous alphabet sources. Now,  in (\ref{eq:ch6:sprt_LLR}) $P_i$ is replaced by $f_i$, $i=0,1$. Since we do not know $f_1$, we would need an estimate of $Z_n \stackrel{\Delta}=\sum_{k=1}^{n}\log f_1(X_k)$. If $E[\log f_1(X_1)] < \infty$, then by strong law of large numbers, $Z_n/n$ is a.s.\ close to $E[\log f_1(X_1)]$ for all large $n$. Thus, if we have an estimate of $E[\log f_1(X_1)]$ we will be able to replace $Z_n$ as in (\ref{eq:ch6:two_sided_finite_alphabet_LLR}). In the following we get a universal estimate of $E[\log f_1(X_1)]\stackrel{\Delta}=-h(X_1)$, where $h$ is the differential entropy of $X_1$, via the universal data compression algorithms. 

First we quantize $X_i$ via a uniform quantizer with a quantization step $\Delta>0$. Let the quantized observations be $X_i^{\Delta}$ and the quantized vector $X_1^{\Delta},\ldots,X_n^{\Delta}$ be $X_{1:n}^{\Delta}$. We know that $H(X_1^\Delta)+\log \Delta \to h(X_1)$ as $\Delta\to 0$ (\cite{Cover_EIT_book}). Given i.i.d. observations $X_1^\Delta,X_2^\Delta,\ldots,X_n^\Delta$, its code length for a good universal lossless coding algorithm approximates $nH(X_1^\Delta)$ as $n$ increases. This idea gives rise to the following modification to (\ref{eq:ch6:two_sided_finite_alphabet_LLR}), 
\begin{equation}
\label{eq:ch6:two_sided_cont_alphabet_LLR}
\widetilde{W}_n=-L_n(X_{1:n}^{\Delta})-n\log \Delta-\sum_{k=1}^{n}\log f_0(X_k)-n\frac{\lambda}{2}
\end{equation}
and as for the finite alphabet case, to get some performance guarantee, we restrict $f_1$ to a class of densities,
\begin{equation}
\label{eq:ch6:div_condn_cont}
\mathcal{C}=\{f_1:D(f_1||f_0)\geq \lambda\}.
\end{equation}

Let the divergence after quantization be $D(f_1^{\Delta}||f_0^{\Delta})$, $f_i^{\Delta}$ being the probability mass function after quantizing $f_i$. Then by data-processing inequality (\cite{Cover_EIT_book}) $D(f_1||f_0) \geq D(f_1^{\Delta}||f_0^{\Delta})$. When $\Delta \to 0$ the lower bound is asymptotically tight and this suggests choosing $\lambda$ based on the divergence between the continuous distributions before quantization.

The following comments justify the above quantization.
\begin{enumerate}
\item It is known that uniform scalar quantization with variable-length coding of $n$ successive quantizer outputs achieves the optimal operational distortion rate function for quantization at high rates (\cite{Gray_TIT1998}). 
\item We can also consider an adaptive uniform quantizer, which is changing at each time step (\cite{Wang_FTCIT_book}). But this makes the scalar quantized observations dependent (due to learning from the available data at that time) and non-identically distributed. Due to this the universal codelength function is unable to learn the underlying distribution.
\item If we have non-uniform partitions with width $\Delta_j$ at $j^{th}$ bin with probability mass $p_j$, then the likelihood sum in (\ref{eq:ch6:two_sided_cont_alphabet_LLR}) becomes,
\[ -L_n(X_{1:n}^{\Delta})-n\sum_{j} p_j \log \Delta_j-\log f_0(X_1^n)-n\frac{\lambda}{2}.\]
Thus non-uniform quantizers require knowledge of $p_j$ which is not available under $H_1$.
\item Assuming we have i.i.d observations, uniform quantization has another advantage: (\ref{eq:ch6:two_sided_cont_alphabet_LLR}) can be written as 
\[-L_n(X_{1:n}^{\Delta})-\sum_{k=1}^{n}\log (f_0(X_k)\Delta)-n\frac{\lambda}{2}.\]
Under the high rate assumption, $f_0(X_k)\Delta \approx f_i^{\Delta}(X_k^{\Delta})$. Thus, $\widetilde{W}_n$ depends upon the quantized observations only and we do not need to store the original observations.

\item The range of the quantization can be fixed by considering only those $f_1$'s whose tail probabilities are less than a small specific value at a fixed boundary and use these boundaries as range.
\end{enumerate}

We could possibly approximate differential entropy $h(X_1)$ by universal lossy coding algorithms (\cite{Dron_DCC2010}, \cite{Yang_TIT1997}). But these algorithms require a large number of samples (more than 1000) to provide a reasonable approximation. In our application we are interested in minimising the expected number of samples in a sequential setup. Thus, we found the algorithms in \cite{Dron_DCC2010} and \cite{Yang_TIT1997} inappropriate for our applications. 

\section{Universal Source Codes}
\label{sec:ch6:lzslrt-ktslrt}
In this section we present two universal source codes which we will use in our algorithms.
\subsection{LZSLRT (Lempel-Ziv Sequential Likelihood Ratio Test)}
\label{subsec:ch5:LZSLRT}
In the following in (\ref{eq:ch6:two_sided_cont_alphabet_LLR}) we use LZ78 (\cite{Ziv_TIT1978}), which is a well known efficient universal source coding algorithm. We call the resulting test as LZSLRT. LZ78 can be summarized as follows:
\begin{enumerate}
\item Parse the input string into phrases where each phrase is the shortest phrase not seen earlier.
\item Encode each phrase by giving the location of the prefix of
the phrase and the value of the latest symbol in the phrase.
\end{enumerate}

Let $t$ be the number of phrases after parsing $X_1^{\Delta},\ldots,X_n^{\Delta}$ in LZ78 encoder and $|A|$ be the alphabet size of the quantized alphabet. The codelength for LZ78 is
\begin{equation}
\label{eq:ch6:LZ78_codelength}
L_n(X_{1:n}^{\Delta})=\sum_{i=1}^{t}\lceil \log i|A| \rceil.
\end{equation}

At low $n$, which is of interest in sequential detection, the approximation for the log likelihood function via LZSLRT, using (\ref{eq:ch6:LZ78_codelength}) is usually poor as universal coding requires a few samples to learn the source. Hence we add a correction term $n\epsilon_n$, in the likelihood sum in (\ref{eq:ch6:two_sided_cont_alphabet_LLR}), where $\epsilon_n$ is the redundancy for universal lossless codelength function. It is shown in \cite{Kieffer_LNSC_unpublished}, that
\begin{equation}
L_n(X_{1:n}^{\Delta})\leq n\tilde{H}_n(X_{1:n}^\Delta)+n\epsilon_n,
\end{equation}
where
\begin{equation*}
\epsilon_n=C \left(\frac{1}{\log n}+\frac{\log \log n}{n}+\frac{\log \log n}{\log n}\right).
\end{equation*}
Here $C$ is a constant which depends on the size of the quantized alphabet and $\tilde{H}_n(X_1^\Delta)$ is the empirical entropy, which is the entropy calculated using the empirical distribution of samples upto time $n$. Thus the test statistic $\widetilde{W}_n^{LZ}$, is
\begin{equation}
\label{eq:sec_lzslrt_final_expn}
\widetilde{W}_n^{LZ}=-\sum_{i=1}^{t}\lceil \log i|A| \rceil-C \left(\frac{1}{\log n}+\frac{\log \log n}{n}+\frac{\log \log n}{\log n}\right)-n\log \Delta-\sum_{k=1}^{n}\log f_0(X_k)-n\frac{\lambda}{2}.
\end{equation} 
\noindent To obtain $t$, the sequence $X^{\Delta}_{1:n}$ needs to be parsed through the LZ78 encoder.
\vspace{-0.15 cm}
\subsection{KTSLRT (Krichevsky-Trofimov Sequential Likelihood Ratio Test)}
In this section we propose KTSLRT for i.i.d. sources. The codelength function $L_n$ in (\ref{eq:ch6:two_sided_cont_alphabet_LLR}) now comes from the combined use of KT (Krichevsky-Trofimov \cite{Krichevsky_TIT1981}) estimator of the distribution of the quantized source and the Arithmetic Encoder (\cite{Cover_EIT_book}) (i.e., Arithmetic Encoder needs the distribution of $X_n$ which we obtain in this test from the KT-estimator). Together these form a universal source encoder which we call KT-AE. It is proved in \cite{Csiszar_ITS_book} that universal codes defined by the KT-AE are nearly optimal for i.i.d.\ finite alphabet sources. We will show that the test obtained via KT-AE often substantially outperforms LZSLRT.

KT-estimator for a finite alphabet source is defined as,
\begin{equation}
\label{eq:KT-estimator}
P_c(x_1^n)=\prod_{t=1}^{n}\frac{v(x_t/x_{1}^{t-1})+\frac{1}{2}}{t-1+\frac{|A|}{2}},
\end{equation}
where $v(i/x_{1}^{t-1})$ denotes the number of occurrences of the symbol $i$ in $x_1^{t-1}$. It is known (\cite{Cover_EIT_book}) that the coding redundancy of the Arithmetic Encoder is smaller than 2 bits, i.e., if $P_c(x_1^n)$ is the coding distribution used in the Arithmetic Encoder then $L_n(x_1^n)<-\log P_c(x_1^n)+2$. In our test we actually use $-\log P_c(x_1^n)+2$ as the code length function and do not need to implement the Arithmetic Encoder. This is an advantage over the scheme LZSLRT presented above.

Writing (\ref{eq:KT-estimator}) recursively, (\ref{eq:ch6:two_sided_cont_alphabet_LLR}) can be modified as,

\begin{equation}
\hspace{0cm}
\label{eq:ch6:two_sided_cont_alphabet_LLR_KT}
\widetilde{W}_n^{KT}=\widetilde{W}_{n-1}^{KT}+\log \left(\frac{v(X_n^{\Delta}/X_{1}^{\Delta n-1})+\frac{1}{2}+S}{t-1+\frac{|A|}{2}}\right)-\log \Delta-\log f_0(X_n)-\frac{\lambda}{2},
\end{equation}
\normalsize
where $S$ is a scalar constant whose value greatly influences the performance. The default value of $S$ is zero.
\section{Performance Comparison}
\label{sec:ch6:uni_seq_per_com}
We compare the performance of LZSLRT to that of SPRT and a nearly optimal sequential GLR test, GLR-Lai (\cite{Lai_SS2001}), via simulations in Section \ref{subsec:ch6:LZSLRT_simulation}. Performance of KTSLRT through simulations and comparison with LZSLRT are provided in Section \ref{subsec:ch6:KTSLRT_simulation}. We also compare with some other estimators available in literature. It has been observed from our experiments that due to the difference in the expected drift of likelihood ratio process under $H_1$ and $H_0$, some algorithms perform better under one hypothesis and worse under the other hypothesis. Hence instead of plotting $E_{1}[N]$ versus $P_{MD}$ and $E_{0}[N]$ versus $P_{FA}$ separately, we plot $E_{DD} \stackrel{\Delta}{=}\allowbreak 0.5E_{1}[N]+0.5E_{0}[N]$ versus $P_E \stackrel{\Delta}{=}\allowbreak 0.5P_{FA}+0.5P_{MD}$. We use an eight bit uniform quantizer.

\subsection{LZSLRT}
\label{subsec:ch6:LZSLRT_simulation}
\begin{figure}[!htb]
\begin{center}
\includegraphics[trim = 16mm 0 19mm 5mm, clip=true,scale=0.4]{./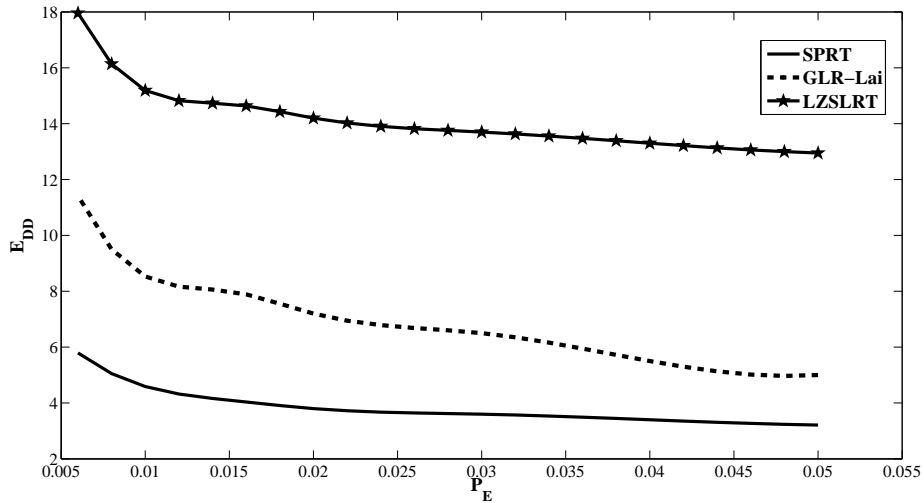}
\caption{Comparison among SPRT, GLR-Lai and LZSLRT for Gaussian Distrbution}
\label{fig:ch6:LZSLRT_single_Gaussian}
\end{center}
\vspace{-1 cm}
\end{figure}

\begin{figure}[!htb]
\begin{center}
\includegraphics[trim = 16mm 0 19mm 5mm, clip=true,scale=0.4]{./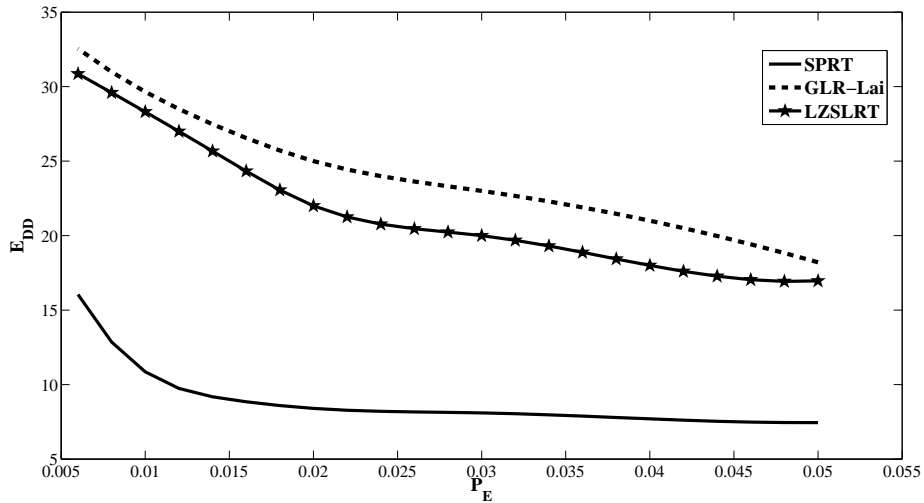}
\caption{Comparison among SPRT, GLR-Lai and LZSLRT for Pareto Distrbution.}
\label{fig:ch6:LZSLRT_single_Pareto}
\end{center}
\vspace{-1 cm}
\end{figure}
%
Figure \ref{fig:ch6:LZSLRT_single_Gaussian} and Figure \ref{fig:ch6:LZSLRT_single_Pareto} present numerical comparisons for Gaussian and Pareto distributions respectively. The experimental set up for Figure \ref{fig:ch6:LZSLRT_single_Gaussian} is, $f_{0}\sim \mathcal{N}(0,5)$, $f_{1}\sim \mathcal{N}(3,5)$ and $\Delta=0.3125$. The setup for Figure \ref{fig:ch6:LZSLRT_single_Pareto} is, $f_{0}\sim \mathcal{P}(10,2)$ and $f_{1}\sim \mathcal{P}(3,2)$, where $\mathcal{P}(K,A)$ is the Pareto density function with $K$ and $A$ as the shape and scale parameter of the distribution. We observe that although LZSLRT performs worse for Gaussian distribution (GLR-Lai is nearly optimal for exponential family), it works better than GLR-Lai for the Pareto Distribution.
\vspace{-0.1 cm}
\subsection{KTSLRT}
\label{subsec:ch6:KTSLRT_simulation}
Figure \ref{fig:ch6:gaussian_var_change} shows the comparison of LZSLRT with KTSLRT when $f_1 \sim \mathcal{N}(0,5)$ and $f_0 \sim \mathcal{N}(0,1)$. We observe that LZSLRT and KTSLRT with $S=0$ (the default case) are not able to give $P_E$ less than 0.3 and 0.23 respectively, although KTSLRT with $S=1$ provides much better performance. We have found in our simulations with other data also that KTSLRT with $S=0$ performs much worse than with $S=1$ and hence in the following we consider KTSLRT with $S=1$ only.
\begin{figure}[!htb]
\begin{center}
\includegraphics[trim = 16mm 0 19mm 5mm, clip=true,scale=0.5]{./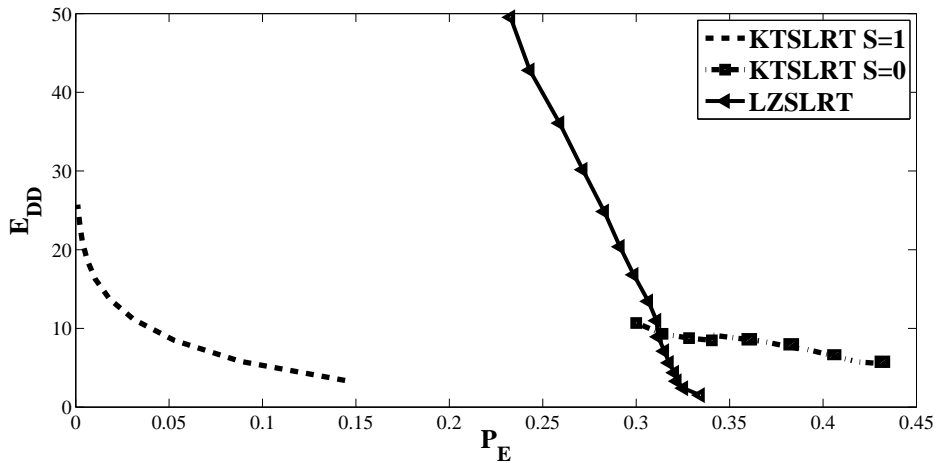}
\caption{Comparison between KTSLRT and LZSLRT for Gaussian Distribution.}
\label{fig:ch6:gaussian_var_change}
\end{center}
\vspace{-1 cm}
\end{figure}
Next we provide comparison for two heavy tail distributions.

Figure \ref{fig:ch6:lognormal} displays the Lognormal distribution comparison when $f_1 \sim ln\mathcal{N}(3,3)$, $f_0 \sim ln\mathcal{N}(0,3)$ and $ln\mathcal{N}(a,b)$ indicates the density function of Lognormal distribution with the underlying Gaussian distribution $\mathcal{N}(a,b)$. It can be observed that $P_E$ less than 0.1 is not achievable by LZSLRT. KTSLRT with $S=1$ provides a good performance.
\begin{figure}[!htb]
\begin{center}
\includegraphics[trim = 16mm 0 19mm 5mm, clip=true,scale=0.5]{./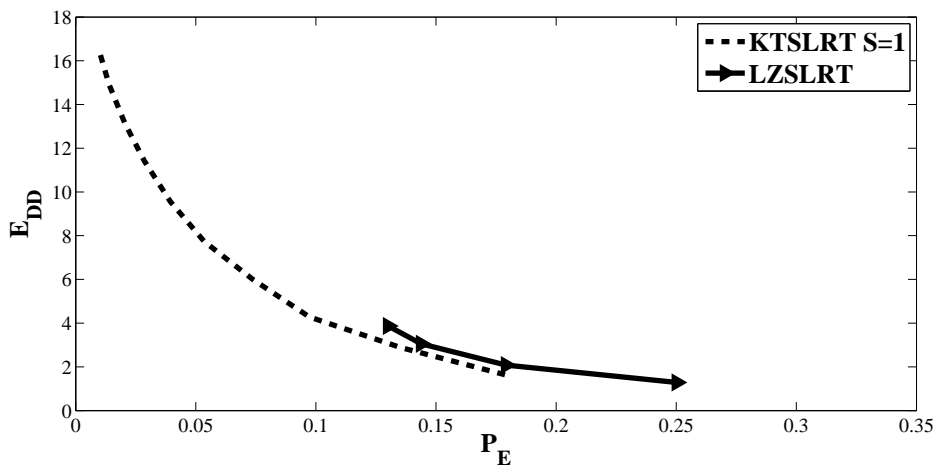}
\caption{Comparison between KTSLRT and LZSLRT for Lognormal Distribution.}
\label{fig:ch6:lognormal}
\end{center}
\vspace{-1 cm}
\end{figure}

Figure \ref{fig:ch6:pareto} shows the results for Pareto distribution. Here $f_1 \sim \mathcal{P}(3,2)$, $f_0 \sim \mathcal{P}(10,2)$ and support set $(2,10)$. We observe that KTSLRT with $S=1$ and LZSLRT have comparable performance. 
\begin{figure}[!htb]
\begin{center}
\includegraphics[trim = 16mm 0 19mm 5mm, clip=true,scale=0.5]{./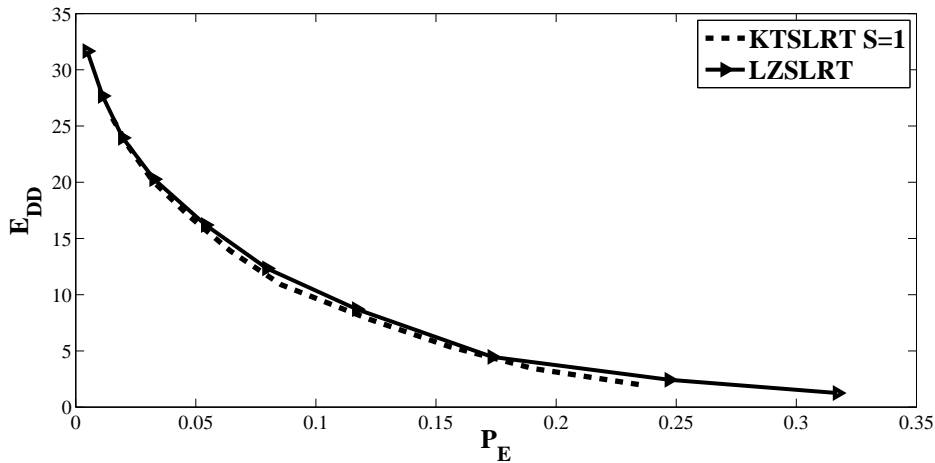}
\caption{Comparison between KTSLRT and LZSLRT for Pareto Distribution.}
\label{fig:ch6:pareto}
\end{center}
\vspace{-1 cm}
\end{figure}

It is observed by us that as $S$ increases, till a particular value the performance of KTSLRT improves and afterwards it starts to deteriorate. For all the examples we considered, $S=1$ provides good performance.

\begin{Remarks}
\label{rem:sup_perf_KTSLRT_arg}
The superior performance of KTSLRT over LZSLRT attributes to the pointwise redundancy rate $n^{-1} (L_n(X_1^n)+\log P(X_1^n))=\mathcal{O}(\log n/n)$ of KT-AE (\cite{Xie_TIT2000}) as compared to $\mathcal{O}(1/\log n)$ of LZ78 (\cite{Kieffer_DCC1999}).
\end{Remarks}

In Figure \ref{fig:ch6:entropy_density_estr} we compare KTSLRT with sequential tests in which $-n\hat{h}_n$ replaces $\sum_{k=1}^{n}\log f_1(X_k)$ where $\hat{h}_n$ is an estimate of the differential entropy and with a test defined by replacing $f_1$ by a density estimator $\hat{f}_n$. 

\begin{figure}[!htb]
\begin{center}
\includegraphics[trim = 16mm 0 19mm 5mm, clip=true,scale=0.5]{./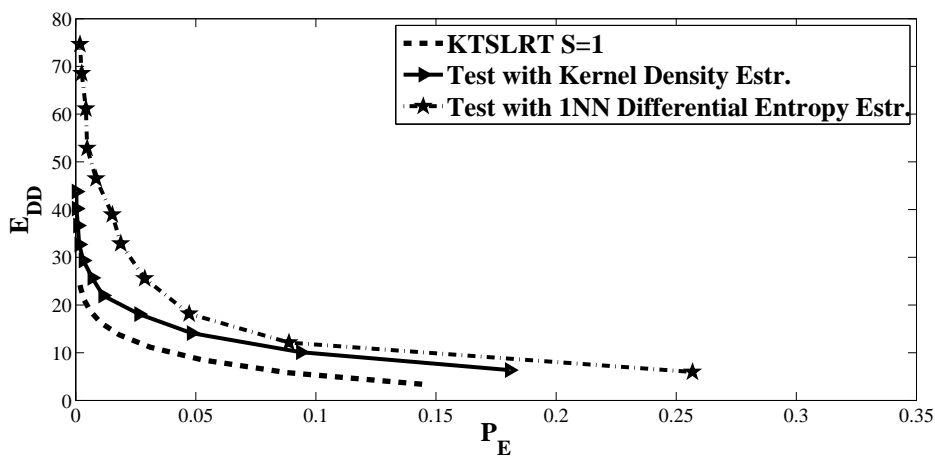}
\caption{Comparison among KTSLRT, universal sequential tests using 1NN differential entropy estimator and that using Kernel density estimator.}
\label{fig:ch6:entropy_density_estr}
\end{center}
\vspace{-1 cm}
\end{figure}

It is shown in \cite{Wang_FTCIT_book} that 1NN (1st Nearest Neighbourhood) differential entropy estimator performs better than other differential entropy estimators where 1-NN differential entropy estimator is 
\[\hat{h}_n=\frac{1}{n}\sum_{i=1}^{n}\log\rho(i)+\log(n-1)+\gamma+1,\]
$\rho(i)\stackrel{\Delta}=\min_{j:1 \leq j\leq n, j\neq i} ||X_i-X_j||$ and $\gamma$ is the Euler-Mascheroni constant (=0.5772...). 

There are many density estimators available (\cite{Silverman_DESDA_book}). We use the Gaussian example in Figure \ref{fig:ch6:gaussian_var_change} for comparison. For Gaussian distributions, a Kernel density estimator is a good choice as optimal expressions are available for the parameters in the Kernel density estimators (\cite{Silverman_DESDA_book}). The Kernel density estimator at a point $z$ is
\[\hat{f}_n(z)=\frac{1}{nh}\sum_{i=1}^{n}K\left(\frac{z-X_i}{h}\right),\]where $K$ is the kernel and $h>0$ is a smoothing parameter called the bandwidth. If Gaussian kernel is used and the underlying density being estimated is Gaussian then it can be shown that the optimal choice for $h$ is (\cite{Silverman_DESDA_book}) ${\left({4{\hat{\sigma}}^5}/{3n}\right)}^{{1}/{5}}$, where $\hat{\sigma}$ is the standard deviation of the samples.

We provide the comparison of KTSLRT with the above two schemes in Figure \ref{fig:ch6:entropy_density_estr}. We find that KTSLRT with $S=1$ performs the best.

Next we provide comparison with the asymptotically optimal universal fixed sample size test for finite alphabet sources. This test is called Hoeffding test (\cite{Hoeffding_AMS1965}, \cite{Levitan_TIT2002}, \cite{Unnikrishnan_TIT2011}) and it is optimal in terms of error exponents (\ref{eq:ch6:FSS_error_exp_opt_crit}) for i.i.d.\,sources over a finite alphabet. The decision rule of Hoeffding test, $\delta_{FSS}=\mathbb{I}\{D(\Gamma^n || P_0) \geq\, \eta\}$, where $\Gamma^n(x)$ is the type of $X_1,\ldots,X_n$, $=\{\frac{1}{n}\sum_{i=1}^{N}\mathbb{I}\{X_i=x\}$, $x\in \mathcal{X}\}$, $\mathcal{X}$ is the source alphabet, $N$ is the cardinality of $\mathcal{X}$ and $\eta >0$ is an appropriate threshold. From \cite[Theorem III.2]{Unnikrishnan_TIT2011},
{\allowdisplaybreaks
\begin{eqnarray}
\text{ under $P_0$, }& &nD(\Gamma^n || P_0) \xrightarrow[n \to \infty]{d} \frac{1}{2}\chi_{N-1}^2, \nonumber \\ 
\text{ under $P_1$, }& & \sqrt{n} \big(D(\Gamma^n || P_0)-D(P_1 || P_0)\big)\xrightarrow[n \to \infty]{d} \mathcal{N}(0,\sigma_1^2),
\end{eqnarray}}
\noindent where $\sigma_1^2=Var_{P_1}\left[\log \frac{P_1(X_1)}{P_0(X_1)}\right]$ and $\chi_{N-1}^2$ is the Chi-Squared distribution with $N-1$ degrees of freedom. From the above two approximations, number of samples, $n$ to achieve $P_{FA}$ and $P_{MD}$ can be computed theoretically as a solution of 
\[2nD(P_1||P_0)+2\sqrt{n} F_{\mathcal{N}}^{-1}(P_{MD})-F_{\chi}^{-1}(1-P_{FA})=0 \,,\]
\noindent where $F_{\mathcal{N}}^{-1}$ and $F_{\chi}^{-1}$ denote inverse cdf's of the above Gaussian and Chi-Squared distributions.

Since this is a discrete alphabet case, we use (\ref{eq:ch6:two_sided_finite_alphabet_LLR}) with $L_n(X_1^n)$ as the codelength function of the universal code, KT-AE. Figure \ref{fig:ch6:comp_hoeff_kt_bern} shows comparison of this test with the Hoeffding test. Here $P_0\sim Be(0.2)$ and $P_1 \sim Be(0.5)$ and $Be(p)$ indicates the Bernoulli distribution with parameter $p$. Figure \ref{fig:ch6:comp_hoeff_kt_binom} provides the comparison when $P_0\sim B(8,0.2)$ and $P_1\sim B(8,0.5)$, where $B(n,p)$ represents the Binomial distribution with $n$ trials and $p$ as the success probability in each trial. It can be seen that our test outperforms Hoeffding test in both these examples in terms of average number of samples.
\begin{figure}[!htb]
\begin{center}
\includegraphics[trim = 16mm 0 19mm 5mm, clip=true,scale=0.4]{./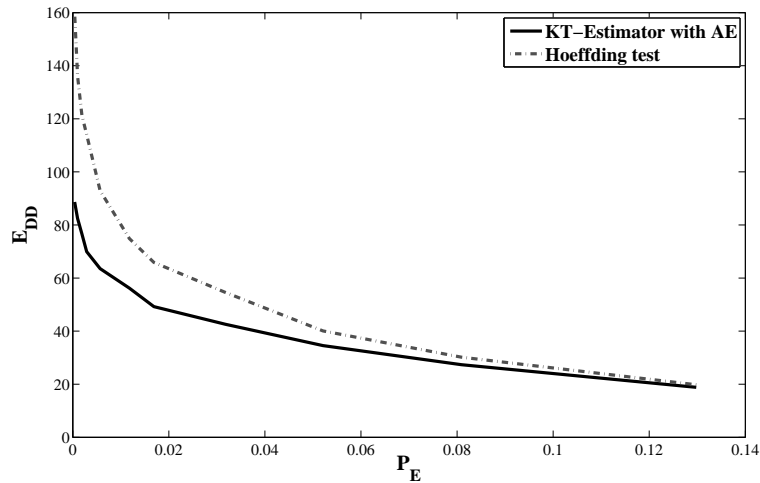}
\caption{Comparison between Hoeffding test and our discrete alphabet test (\ref{eq:ch6:two_sided_finite_alphabet_LLR}) for Bernoulli distribution}
\label{fig:ch6:comp_hoeff_kt_bern}
\end{center}
\vspace{-1 cm}
\end{figure}

\begin{figure}[!htb]
\begin{center}
\includegraphics[trim = 16mm 0 19mm 5mm, clip=true,scale=0.5]{./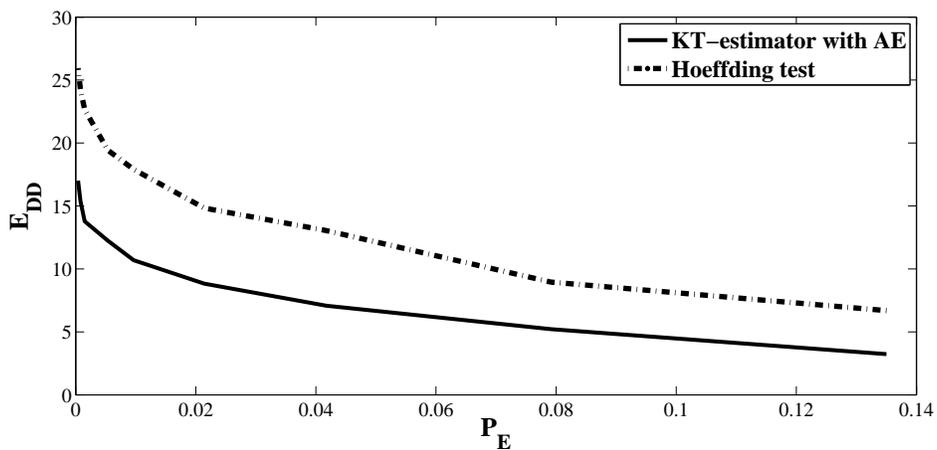}
\caption{Comparison between Hoeffding test and our discrete alphabet test (\ref{eq:ch6:two_sided_finite_alphabet_LLR}) for Binomial distribution}
\label{fig:ch6:comp_hoeff_kt_binom}
\end{center}
\vspace{-1 cm}
\end{figure}

\section{Decentralized Detection}
\label{sec:ch6:uni_seq_dec_det}
\subsection{Algorithm}
\label{subsec:ch6:uni_seq_dec_det_alg}
Motivated by the satisfactory performance of a single node case, we extend LZSLRT and KTSLRT to the decentralized setup in \cite{Banerjee_WCOM} and \cite{Jithin_JSAC-submitted_arxiv}. In this setup we consider a decentralized network with one fusion center (FC) and $L$ local nodes. The local nodes use observations to make local decisions about the presence of a primary and transmit them to the FC. The FC makes the final decision based on the local decisions it received.

Let $X_{k,l}$ be the observation made at local node $l$ at time $k$. We assume that $\{X_{k,l}, k\geq 1\}$ are i.i.d.\ and that the observations are independent across local nodes. We will denote by $f_{1,l}$ and $f_{0,l}$ the densities of $X_{k,l}$ under $H_1$ and $H_0$ respectively. Using the detection algorithm based on $\{X_{n,l},n\leq k\}$ the local node $l$ transmits $Y_{k,l}$ to the fusion node at time $k$. We assume a multiple-access channel (MAC) between the local nodes and the FC in which the FC receives $Y_k$, a coherent superposition of the local node transmissions: $Y_k=\sum_{l=1}^{L}Y_{k,l}+Z_k$, where $\{Z_k\}$ is i.i.d. noise. FC observes $Y_k$, runs a decision rule and decides upon the hypothesis.

Now our assumptions are that at local nodes, $f_{0,l}$ is known but $f_{1,l}$ is not known. The distribution of $Z_k$ is known to the FC. Thus we use LZSLRT at each local node and Wald's SPRT-like procedure at the fusion center (we call it LZSLRT-SPRT). Similarly we can use KTSLRT at each local node and SPRT-like test at the fusion center and call it KTSLRT-SPRT. In both the cases whenever at a local node, a threshold is crossed, it transmits $Y_{k,l}$, which is $b_1$ if its decision is $H_1$; otherwise $b_0$. For thresholds, we use $-\gamma_{0,l}$ and $\gamma_{1,l}$, at local node $l$, as $\log \beta$ and $-\log \alpha$ in Section \ref{sec:ch6:uni_seq_fin_alp}. At the FC we have a sequential test defined by the stopping rule $\min\{n: \sum_{k=1}^n \log \frac{g_{\mu_1}(Y_k)}{g_{-\mu_0}(Y_k)} \notin (-\beta_0,\beta_1)\}$ where $g_{\mu_1}$ is the density of $Z_k+\mu_1$ and $g_{-\mu_0}$ is the density of $Z_k-\mu_0$, and $\mu_0$ and $\mu_1$ are design parameters. The constants $\beta_1$, $\beta_0$, $\mu_0$ and $\mu_1$ have positive values. When the test statistic crosses $\beta_1$, $H_1$ is declared and when it crosses $-\beta_0$, $H_0$ is declared. At the FC, the Log Likelihood Ratio Process $\{F_k\}$ crosses upper threshold under $H_1$ when a sufficient number of local nodes (denoted by $I$, to be specified appropriately) transmit $b_1$. Thus $\mu_1=b_1I$ and similarly $\mu_0=b_0I$.

Thus the overall decentralized algorithm is
\begin{itemize}
\item[(i)] Node $l$ receives $X_{k,l}$ at time $k \geq 1$ and computes $\widehat{W}_{k+1,l}$ as in \eqref{eq:sec_lzslrt_final_expn} or \eqref{eq:ch6:two_sided_cont_alphabet_LLR_KT}, depending on the test selected.
\item[(ii)] Node $l$ transmits 
\begin{equation*}
Y_{k+1,l}=b_1 \mathbb{I}\{\widehat{W}_{k+1} \geq \gamma_{1,l}\}+ b_0 \mathbb{I}\{\widehat{W}_{k+1} \leq -\gamma_{0,l}\}
\end{equation*}
\item[(iii)] Fusion node receives at time $k+1$
\begin{equation*}
Y_{k+1}=\sum_{l=1}^{L} Y_{k+1,l} +Z_k
\end{equation*}
\item[(iv)] Fusion node computes
\begin{equation*}
F_{k+1}=F_k+\log \frac{g_{\mu_1}(Y_k)}{g_{-\mu_0}(Y_k)}
\end{equation*}
\item[(v)] Fusion node decides $H_0$ if $F_{k+1} \leq -\beta_0$ or $H_1$ if $F_{k+1} \geq \beta_1$.
\end{itemize}

In the following we compare the performance of LZSLRT-SPRT, KTSLRT-SPRT and DualSPRT developed in \cite{Jithin_JSAC-submitted_arxiv,Jithin_WCNC2011} which runs SPRT at local node and FC and hence requires knowledge of $f_{1,l}$ at local node $l$. Asymptotically, DualSPRT is shown to achieve performance close to that of the optimal centralized test, which does not consider fusion center noise. We choose $b_1=1$, $b_0=-1$, $I=2$, $L=5$ and $Z_k\sim \mathcal{N}(0,1)$ and assume same SNR for all the local nodes to reduce the complexity of simulations. We use an eight bit quantizer in all these experiments. In Figure \ref{fig:ch6:dec_gaussian}, $f_{0,l}\sim \mathcal{N}(0,1)$ and $f_{1,l} \sim \mathcal{N}(0,5)$, for $1\leq l \leq L$. The setup for Figure \ref{fig:ch6:dec_pareto} is $f_{0,l}\sim \mathcal{P}(10,2)$ and $f_{0,l}\sim \mathcal{P}(3,2)$, for $1\leq l \leq L$. FC thresholds are chosen appropriately with the available expressions for SPRT. In both the cases KTSLRT-SPRT performs better than LZSLRT-SPRT. It also performs better than DualSPRT for higher values of $P_E$.
\begin{figure}[!htb]
\begin{center}
\includegraphics[trim = 16mm 0 19mm 5mm, clip=true,scale=0.5]{./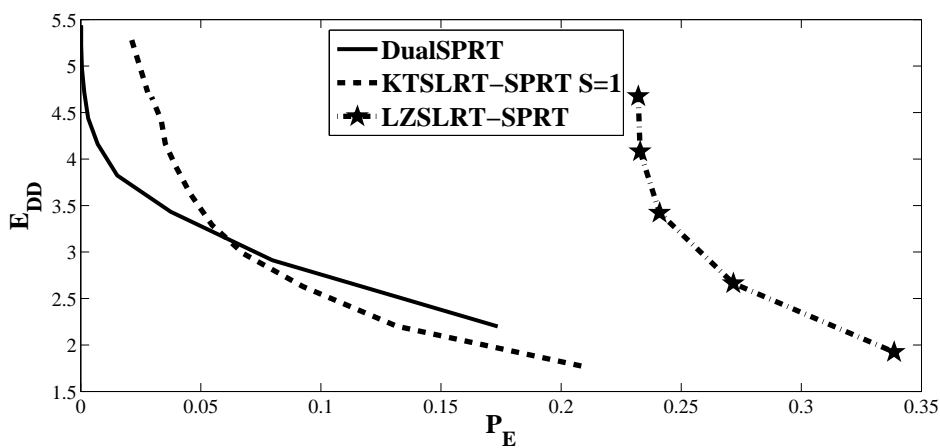}
\caption{Comparison among DualSPRT, KTSLRT-SPRT and LZSLRT-SPRT for Gaussian Distribution}
\label{fig:ch6:dec_gaussian}
\end{center}
\vspace{-1 cm}
\end{figure}

\begin{figure}[!htb]
\begin{center}
\includegraphics[trim = 16mm 0 19mm 5mm, clip=true,scale=0.5]{./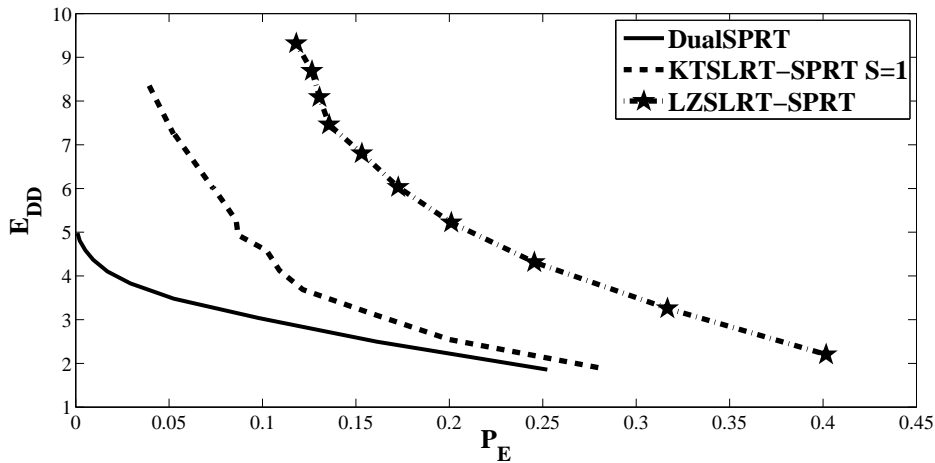}
\caption{Comparison among DualSPRT, KTSLRT-SPRT and LZSLRT-SPRT for Pareto Distribution}
\label{fig:ch6:dec_pareto}
\end{center}
\vspace{-1 cm}
\end{figure}

\subsection{Performance Analysis}
%
Since the analysis is almost same under $H_1$ and $H_0$ with necessary modifications, we provide details only under $H_1$.

At node $l$, let $P_{0,l}$ and $P_{1,l}$ denote the distribution under $H_0$ and $H_1$ respectively, at local node $l$, and 
\[\delta_{l}=E_{1}\left[\log \frac{P_{1,l}(X_{k,l})}{P_{0,l}(X_{k,l})}-\frac{\lambda}{2} \right],\:\rho^2_{l}=Var_{H_1} \left[\log \frac{P_{1,l}(X_{k,l})}{P_{0,l}(X_{k,l})}-\frac{\lambda}{2} \right]. \]
We will assume $\delta_{l}$ finite throughout this paper. By Jensen's Inequality and (\ref{eq:ch6:div_condn_cont}), $\delta_{l} >0$ under $H_1$. Let
\[N_{l}=\inf \{k: \widehat{W}_{k,l} \notin (-\gamma_{0,l},\gamma_{1,l})\},\;N_{l}^1=\inf \{k: \widehat{W}_{k,l} >\gamma_{1,l} \},\,N_{l}^0=\inf \{k: \widehat{W}_{k,l} <-\gamma_{0,l} \}.\]
Then $N_l=\min\{N_l^0,N_l^1\}$. Let $N_{d}$ denote the stopping time $\inf\{k:F_k \notin (-\beta_0, \beta_1)\}$. Let $N^0_{d}=\inf\{k:F_k\leq -\beta_0\}$ and $N^1_{d}=\inf\{k:F_k\geq \beta_1\}$. Then $N_{d}=\min\{N^1_{d},N^0_{d}\}$.

In the rest of this section, we choose $\gamma_{1,l}=\gamma_{0,l}=\gamma$, $\beta_1=\beta_0=\beta$, $\mu_1=\mu_0=\mu$ and $b_1=-b_0=b$ for simplicity of notation.

From Theorem \ref{thm:ch6:edd_prtys}(b), if $(L_n(X_1^n)+\log P_1(X_1^n))/\sqrt{n} \to 0$ a.s.\ ,
\begin{equation}
\label{eq:ch3:N_l_1_pdf}
N_{l}^1\sim\mathcal{N}(\frac{\gamma}{\delta_{l}}, \frac{\rho^2_{l}\, \gamma}{\delta_{l}^3}).
\end{equation}
\begin{Remarks}
From Remark \ref{rem:sup_perf_KTSLRT_arg}, it can be seen that KT-AE satisfies the condition for \eqref{eq:ch3:N_l_1_pdf}, but not LZ78. The following decentralized analysis is applicable for any universal source code which satisfies this condition.
\end{Remarks}
\subsubsection{$E[N_{d}|H_1]$ Analysis}
At the fusion node $F_k$ crosses $\beta$ first under ${H_1}$ with a high probability when a sufficient number of local nodes transmit $b_1$. The dominant event occurs when the number of local nodes transmitting are such that the mean drift of ${F_k}$ will just have turned positive. In the following we find the mean time to this event and then the time to cross $\beta$ after this. 

The following lemmas provide justification for considering only the events $\{N_l^1\}$ and $\{N^1_{d}\}$ for analysis of $E[N_{d}|H_1]$.
\begin{lemma}
\label{lemma:ch3:N_l_N_wp1}
$P_{1}(N_l=N_l^1) \to 1$ as $\gamma \to \infty$ and $P_{1}(N_{d}=N^1_{d}) \to 1$ as $\gamma \to \infty$ and $\beta \to \infty$.
\end{lemma}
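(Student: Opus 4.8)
The plan is to establish each of the two statements by showing that the probability of the complementary event (a threshold is crossed on the ``wrong'' side first) is negligible in the relevant asymptotic regime. For the first statement, $P_1(N_l = N_l^1) \to 1$ as $\gamma \to \infty$, I would argue as follows. Under $H_1$ the test statistic $\widehat{W}_{n,l}$ defined in \eqref{eq:ch6:two_sided_finite_alphabet_LLR} (equivalently \eqref{eq:sec_lzslrt_final_expn} or \eqref{eq:ch6:two_sided_cont_alphabet_LLR_KT}) has, by the pointwise universality of the code together with the SLLN, an asymptotic drift $\delta_l = E_1[\log(P_{1,l}/P_{0,l})] - \lambda/2 > 0$ (the positivity is exactly the Jensen-inequality argument already invoked in the text using \eqref{eq:ch6:div_condn_cont}). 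Hence $\widehat{W}_{n,l} \to +\infty$ a.s.\ under $H_1$, so $\inf_n \widehat{W}_{n,l} > -\infty$ a.s., and therefore $P_1(N_l^0 < \infty \text{ and } \widehat{W}$ hits $-\gamma \text{ before } \gamma) = P_1(\inf_n \widehat{W}_{n,l} \le -\gamma) \to 0$ as $\gamma \to \infty$ by continuity of measure. Since $\{N_l \ne N_l^1\} \subseteq \{N_l^0 < N_l^1\} \subseteq \{\inf_n \widehat{W}_{n,l} \le -\gamma\}$, the first claim follows. (Alternatively, and giving a quantitative rate, one can bound $P_1(N_l^0 \le N_l^1)$ by a large-deviations/Chernoff estimate on the negative excursion of $\widehat{W}_{n,l}$, of the form $\mathcal{O}(e^{-c\gamma})$ for some $c>0$, mirroring the $P_{MD} = \mathcal{O}(\beta^s)$ bound in Theorem \ref{thm:ch6:pfa_pmd_prtys}(2); this is the cleaner route if a rate is wanted for the later $E[N_d|H_1]$ computation.)

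For the second statement, $P_1(N_d = N_d^1) \to 1$ as $\gamma, \beta \to \infty$, I would condition on the behaviour of the local nodes. On the high-probability event (from the first part, intersected over the finitely many $l = 1,\dots,L$) that every local node that crosses a threshold crosses the upper one, the inputs $Y_{k,l}$ eventually all equal $b_1 = b$, so for $k$ large enough the FC increment is $\log\big(g_{\mu_1}(Y_k)/g_{-\mu_0}(Y_k)\big)$ with $Y_k = Lb + Z_k$, which has positive mean drift $\Delta(\mathcal{A}^1) = \theta_1 > 0$ by the design of $\mu_1,\mu_0$ (this is the ``all nodes transmit $b_1$'' regime, $\mathcal{A}^1$, referenced in Table \ref{tab:ch3:list_symbols_dec}). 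Before that time the partial sum $F_k$ has only taken finitely many steps (the number of nodes is fixed and each node's stopping time $N_l$ is a.s.\ finite with $N_l = \mathcal{O}(\gamma)$), so $\inf_k F_k$ is a.s.\ finite; hence $P_1(F$ hits $-\beta$ before $\beta) \le P_1(\inf_k F_k \le -\beta) \to 0$ as $\beta \to \infty$. Combining with the local-node event gives $P_1(N_d \ne N_d^1) \to 0$.

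The main obstacle is the second part: unlike the single-node statistic, $F_k$ is not a random walk with a fixed drift — its increment distribution changes over time as successive local nodes hit their thresholds (the epochs $t_j$ of Table \ref{tab:ch3:list_symbols_dec}), and in the transient phase where few or ``wrong'' nodes have reported the drift of $F_k$ can be negative, so one must control the negative excursion of a non-homogeneous walk. The way I would handle this is to split $\{N_d^0 \le N_d^1\}$ according to whether the ``bad'' local event occurs: on its complement, $F_k$ is bounded below by a finite (in $\beta$) sum of transient increments plus a positive-drift walk, whose infimum is a.s.\ finite and independent of $\beta$, so pushing $\beta \to \infty$ kills the probability; the bad local event itself has probability $\to 0$ as $\gamma \to \infty$ by the first part and a union bound over $l$. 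A secondary technical point is ensuring the a.s.\ finiteness and $\mathcal{O}(\gamma)$ growth of each $N_l$, which follows from Proposition \ref{prop:ch6:as_st} and Theorem \ref{thm:ch6:edd_prtys}(b) respectively, and the a.s.\ finiteness of $N_d$, which follows from the same argument as Proposition \ref{prop:ch6:as_st} applied at the FC once the local nodes have settled. No delicate estimate beyond these is needed, since the lemma only asserts convergence to $1$, not a rate.
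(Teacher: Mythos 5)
Your argument is correct and follows essentially the same route as the paper: positive a.s.\ drift of $\widehat{W}_{n,l}$ under $H_1$ (via pointwise universality, the SLLN and \eqref{eq:ch6:div_condn_cont}) gives $\widehat{W}_{n,l}\to\infty$ a.s., hence $P_1(N_l^0<\infty)\to 0$ as $\gamma\to\infty$, and then the high-probability event that all local nodes report $b_1$ makes the FC drift positive so that $P_1(N_d^0<N_d^1)\to 0$ as $\beta\to\infty$; in fact your handling of the transient, non-homogeneous phase of $F_k$ is more explicit than the paper's one-line dismissal. The only blemish is notational: you write $\Delta(\mathcal{A}^1)=\theta_1>0$, but in the paper $\theta_1$ is the drift when all nodes transmit \emph{wrong} decisions under $H_1$ and satisfies $\theta_1<0$, so the correct identification is simply $\Delta(\mathcal{A}^1)>0$.
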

\begin{IEEEproof}
We have, $\widehat{W}_{n,l}/n \to D(P_{1,l}||P_{0,l})-\lambda/2$ a.s.\ since \eqref{eq:ch6:pointwise_universality} holds and $\sum_{k=1}^n \log \frac{P_{1,l}(X_{k,l})}{P_{0,l}(X_{k,l})}\to D(P_{1,l}||P_{0,l})$ a.s. Thus by (\ref{eq:ch6:div_condn_cont}), $\widehat{W}_n \to \infty$ a.s. This in turn implies that $\widehat{W}_n$ never crosses some finite negative threshold a.s. This implies that $P_{1}(N_l^0<\infty) \to 0$ as $\gamma \to \infty$ but $P_{1}(N_l^1<\infty)=1$ for any $\gamma < \infty$. Thus $P_{1}(N_l=N_l^1) \to 1$ as $\gamma \to \infty$. This also implies that for large $\gamma$, the drift of $F_k$ is positive for $H_1$ with a high probability and $P_{1}(N_{d}=N^1_{d}) \to 1$ as $\gamma \to \infty$ and $\beta \to \infty$.
\end{IEEEproof}

From Lemma \ref{lemma:ch3:N_l_N_wp1} we also get that under $H_1$, $|N_l-N_l^1| \to 0$ a.s.\ as $\gamma \to \infty$ and $|N_{d}-N^1_{d}| \to 0$ a.s.\ as $\gamma \to \infty $ and $\beta$ $\to \infty$. From this fact, along with Theorem \ref{thm:ch6:edd_prtys}, we can use the result in (\ref{eq:ch3:N_l_1_pdf}) for $N_l$ also. The following lemma also holds.

\begin{lemma}
\label{lemma:ch3:VS_argument_reg_finiteness_st}
Let $t_k$ be the time when $k$ local nodes have made the decision. As $\gamma \to \infty$,
\begin{multline}
\nonumber
P_{1}(\text{Decision at time $t_k$ is $H_1$ and }
\text{$t_k$ is the $k^{th}$ order statistics of $N_1^1,N_2^1,\ldots,N_L^1$}) \to 1.
\end{multline}
\end{lemma}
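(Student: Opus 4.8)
The plan is to obtain this lemma as an almost immediate consequence of Lemma~\ref{lemma:ch3:N_l_N_wp1} together with the finiteness of the number of local nodes $L$. First I would record, for each fixed $l$, that $N_l \neq N_l^1$ can occur only when $N_l^0 < N_l^1$, and that the argument proving Lemma~\ref{lemma:ch3:N_l_N_wp1} gives $P_1(N_l^0 < \infty) \to 0$ as $\gamma \to \infty$, while $P_1(N_l^1 < \infty)=1$ for every finite $\gamma$. Since $L<\infty$, a union bound yields
$$P_1\Big(\bigcap_{l=1}^{L}\{N_l = N_l^1\}\Big) \;\geq\; 1 - \sum_{l=1}^{L} P_1(N_l^0 < \infty) \;\xrightarrow[\gamma\to\infty]{}\; 1 .$$
Call this event $G_\gamma$; on $G_\gamma$ all the $N_l^1$ are finite.

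Next I would show that on $G_\gamma$ the conclusion holds pathwise. On $G_\gamma$ node $l$ stops at time $N_l^1$, at which instant $\widehat{W}_{N_l^1,l} > \gamma_{1,l}$, so node $l$ transmits $b_1$, i.e.\ it declares $H_1$. Hence for every time $t$ the number of nodes that have decided by time $t$ equals $|\{l : N_l^1 \leq t\}|$, and this count first reaches $k$ precisely when $t$ equals the $k$-th smallest of $N_1^1,\ldots,N_L^1$, i.e.\ their $k$-th order statistic. Thus on $G_\gamma$ we have simultaneously that $t_k$ equals the $k$-th order statistic of $N_1^1,\ldots,N_L^1$ and that the decision registered at $t_k$ is $H_1$, for every $k$. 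Therefore the probability in the statement is bounded below by $P_1(G_\gamma)$, which tends to $1$ as $\gamma\to\infty$.

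I do not expect a genuine obstacle here; the lemma is essentially bookkeeping layered on Lemma~\ref{lemma:ch3:N_l_N_wp1}. The only point needing mild care is ties among the integer-valued $N_l^1$: the phrase ``$k$-th order statistic'' must be read as the $k$-th smallest \emph{value} counted with multiplicity, so that several nodes stopping at the same instant correspond to equal order statistics and to $t_k = t_{k+1}$; with that reading $G_\gamma$ supports the claim without exception. As a byproduct, conditioned on $G_\gamma$ the joint law of $(t_1,\ldots,t_L)$ coincides with the law of the order statistics of $(N_1^1,\ldots,N_L^1)$, whose components are independent across $l$ (observations are independent across local nodes); this is the form in which the lemma feeds into the subsequent $E[N_d|H_1]$ analysis.
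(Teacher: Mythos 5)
Your proposal is correct and follows essentially the same route as the paper: the paper also lower-bounds the probability by $P_1(N_l^1 < N_l^0,\ l=1,\ldots,L)$ and invokes Lemma~\ref{lemma:ch3:N_l_N_wp1} (with the implicit union bound over the finitely many nodes) to send this to $1$ as $\gamma \to \infty$. Your additional pathwise bookkeeping on the event $G_\gamma$ and the remark about ties merely make explicit what the paper leaves implicit.
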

\begin{proof}
From Lemma \ref{lemma:ch3:N_l_N_wp1}, \vspace{-0.3 cm}
\begin{multline}
\nonumber
P_{1}(\text{Decision at time $t_k$ is $H_1$ and}\text{ $t_k$ is the $k^{th}$ order statistics of $N_1^1,N_2^1,\ldots,N_L^1$})\\
\geq P_{1}(N_l^1<N_l^0, l=1,\ldots,L) \to 1,\, \text{ as } \gamma \to \infty.\qedhere 
\end{multline} 
\end{proof}

We use Lemma \ref{lemma:ch3:N_l_N_wp1}-\ref{lemma:ch3:VS_argument_reg_finiteness_st}, Theorem \ref{thm:ch6:edd_prtys} and equation (\ref{eq:ch3:N_l_1_pdf}) in the following to obtain an approximation for $E[N_{d}|H_1]$ when $\gamma$ and $\beta$ are large. Large $\gamma$ and $\beta$ are needed for small probability of error. Then we can assume that the local nodes are making correct decisions. Let $\delta_{FC}^j$ be the mean drift of $F_k$, when $j$ local nodes are transmitting. Then $t_j$ is the point at which the drift of $F_k$ changes from $\delta_{FC}^{j-1}$ to $\delta_{FC}^j$ and let $\bar{F_j}=E_1[F_{t_j-1}]$, the mean value of $F_k$ just before transition epoch $t_j$. 

Let
\[l^*=min\{j:\delta_{FC}^j>0 \text{ and } \frac{\beta-\bar{F_j}}{\delta_{FC}^j}< E[t_{j+1}]-E[t_j]\}.\]
\noindent $\bar{F_j}$ can be iteratively calculated as
\begin{equation}
\label{eq:ch3:bar{F_j}}
\bar{F_j}=\bar{F}_{j-1}+\delta_{FC}^j\,(E[t_j]-E[t_{j-1}]),\, \bar{F}_0=0.
\end{equation}
Note that $\delta_{FC}^j\, 0 \leq j \leq L$ can be found by assuming $E_1[Yk]$ as $bj$ and $t_j$ as the $j^{\text{th}}$ order statistics of $\{N_l^1,\, 0\leq l \leq L \}$. The Gaussian approximation (\ref{eq:ch3:N_l_1_pdf}) can be used to calculate the expected value of the order statistics using the method given in \cite{Barakat_SMA2004}. This implies that $E[t_j]$s and hence $\bar{F_j}s$ are available offline. By using these values $E_1[N_{d}]$ ($\approx E_1[N^1_{d}]$) can be approximated as,
\begin{equation}
\label{Eddanalysis1}
E_1[N_{d}]\approx E[t_{l^*}]+\frac{\beta-\bar{F_{l^*}}}{\delta^{l^*}_{FC}},
\end{equation}
\noindent where the first term on R.H.S.\ is the mean time till the drift becomes positive at the fusion node while the second term indicates the mean time for $F_k$ to cross $\beta$ from $t_{l^*}$ onward.

In case of continuous alphabet sources which is assumed in our decentralized algorithm, $\widetilde{W}_n$ in (\ref{eq:ch6:two_sided_cont_alphabet_LLR}) can be modified to
\begin{IEEEeqnarray}{rCl}
 & &-L_n(X_{1:n}^\Delta)-\sum_{k=1}^n\log (f_1(X_k)\Delta)+\sum_{k=1}^n \log \frac{f_1(X_k)}{f_0(X_k)}-\frac{\lambda}{2} \nonumber \\
 & \stackrel{(a)}\approx & -L_n(X_{1:n}^\Delta)-\log (f_1^\Delta(X_{1:n}^\Delta))+\sum_{k=1}^n \log \frac{f_1(X_k)}{f_0(X_k)}-\frac{\lambda}{2}. \IEEEeqnarraynumspace \nonumber
 \label{eq:sec_dec_ana:modfd_w_tilde}
 \end{IEEEeqnarray}
Here $f_1^\Delta$ is the probability mass function after quantizing $X_{k,l}$, which is the distribution being learnt by $L_n(X_{1:n}^\Delta)$. Approximation $(a)$ is due to the approximation $f_1(x)\Delta \approx f_1^{\Delta}(x^\Delta)$ at high rate uniform quantization. By taking $\xi_n=-L_n(X_{1:n}^\Delta)-\log (f_1^\Delta(X_{1:n}^\Delta))$ and $S_n=\sum_{k=1}^n \log \frac{f_1(X_k)}{f_0(X_k)}-\frac{\lambda}{2}$, it is clear that $\widetilde{W}_n$ can be approximated as a perturbed random walk since $\{S_n,n\geq1\}$ is a random walk and $\xi_n/n \to 0 $ a.s.\ from the pointwise convergence of universal source codes.
\subsubsection{$P_{MD}$ Analysis}
At reasonably large local node thresholds, according to Lemma \ref{lemma:ch3:VS_argument_reg_finiteness_st}, with a high probability local nodes are making the right decisions and $t_k$ can be taken as the order statistics assuming that all local nodes make the right decisions. $P_{MD}$ at the fusion node is given by,
\begin{equation}
P_{MD}=P_1(\text{accept $H_0$})=P_1(N^0_d<N^1_d). \nonumber
\end{equation}
It can be easily shown that $P_{1}(N^1_d< \infty)=1$ for any $\beta>0$. Also $P_{1}(N^0_d< \infty) \to 0 $ as $\beta \to \infty$. We should decide the different thresholds such that $P_{1}(N^1_d<t_1)$ is small for reasonable performance. Therefore,
\begin{eqnarray}
\label{eq:ch3:pfa_analysis_new_arg}
P_{MD} = P_{1}(N^0_d<N^1_d) & \geq & P_{1}(N^0_d<t_1, N^1_d> t_1) \approx P_{1}(N^0_d <t_1).
\end{eqnarray}
Also,
\begin{IEEEeqnarray}{rCl}
\label{pfa_analysis_ktslrt-sprt}
P_1(N^0_d<N^1_d)&\leq & P_{1}(N^0_d<\infty)= P_{1}(N^0_d<t_1)+ P_{1}(t_1 \leq N^0_d < t_2)+P_{1}(t_2 \leq N^0_d < t_3)+\ldots \IEEEeqnarraynumspace
\end{IEEEeqnarray}
\noindent The first term in the right hand side is expected to be the dominant term. This is because, from Lemma \ref{lemma:ch3:VS_argument_reg_finiteness_st}, after $t_1$, the drift of $F_k$ will be most likely more positive than before $t_1$ (if $P_{MD}$ at local nodes are reasonably small) and causes fewer errors if the fusion center threshold is chosen appropriately. We have verified this from simulations also. Hence we focus on the first term.
Combining this fact with \eqref{eq:ch3:pfa_analysis_new_arg}, $P_1(N^0_d<t_1)$ will be a good approximation for $P_{1}(\text{reject $H_1$})$. For calculating $P_1(N^0_d<t_1)$, we use the bounding technique and approximate expression given in \cite[Section III-B2]{Jithin_JSAC-submitted_arxiv} with the distribution of $N_l^1$ in (\ref{eq:ch3:N_l_1_pdf}).

Figure \ref{fig:comp_sym_analys_dec} provides comparison between analysis and simulations for continuous distributions. The simulation setup is same as in Figure \ref{fig:ch6:dec_gaussian}. It shows that at low $P_{MD}$, $E_1[N_{d}]$ from theory approximates the simulated value reasonably well.
%
\begin{figure}[h]
\vspace{-0.2 cm}
\centering
\subfloat[Probability of miss-detection]{\includegraphics[trim = 16mm 0 19mm 5mm, clip=true,scale=0.4]{./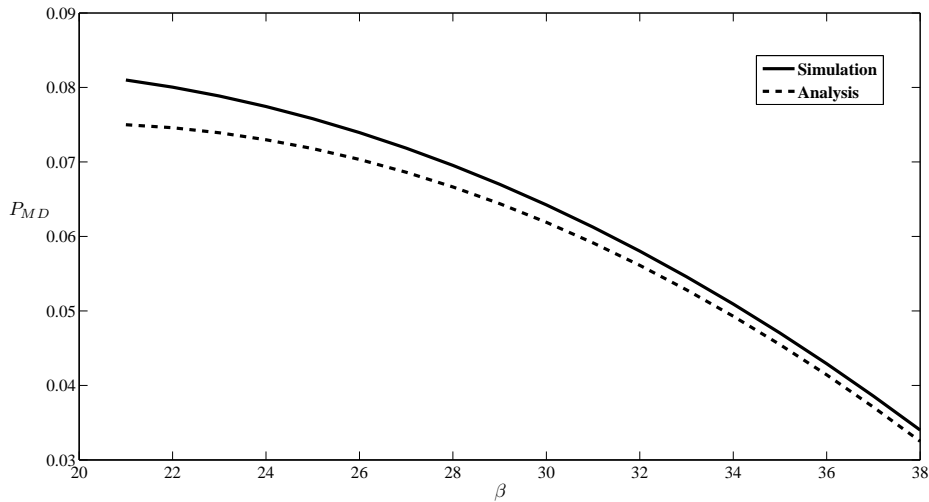}} \\
\subfloat[Expected delay]{\includegraphics[trim = 16mm 0 19mm 5mm, clip=true,scale=0.4]{./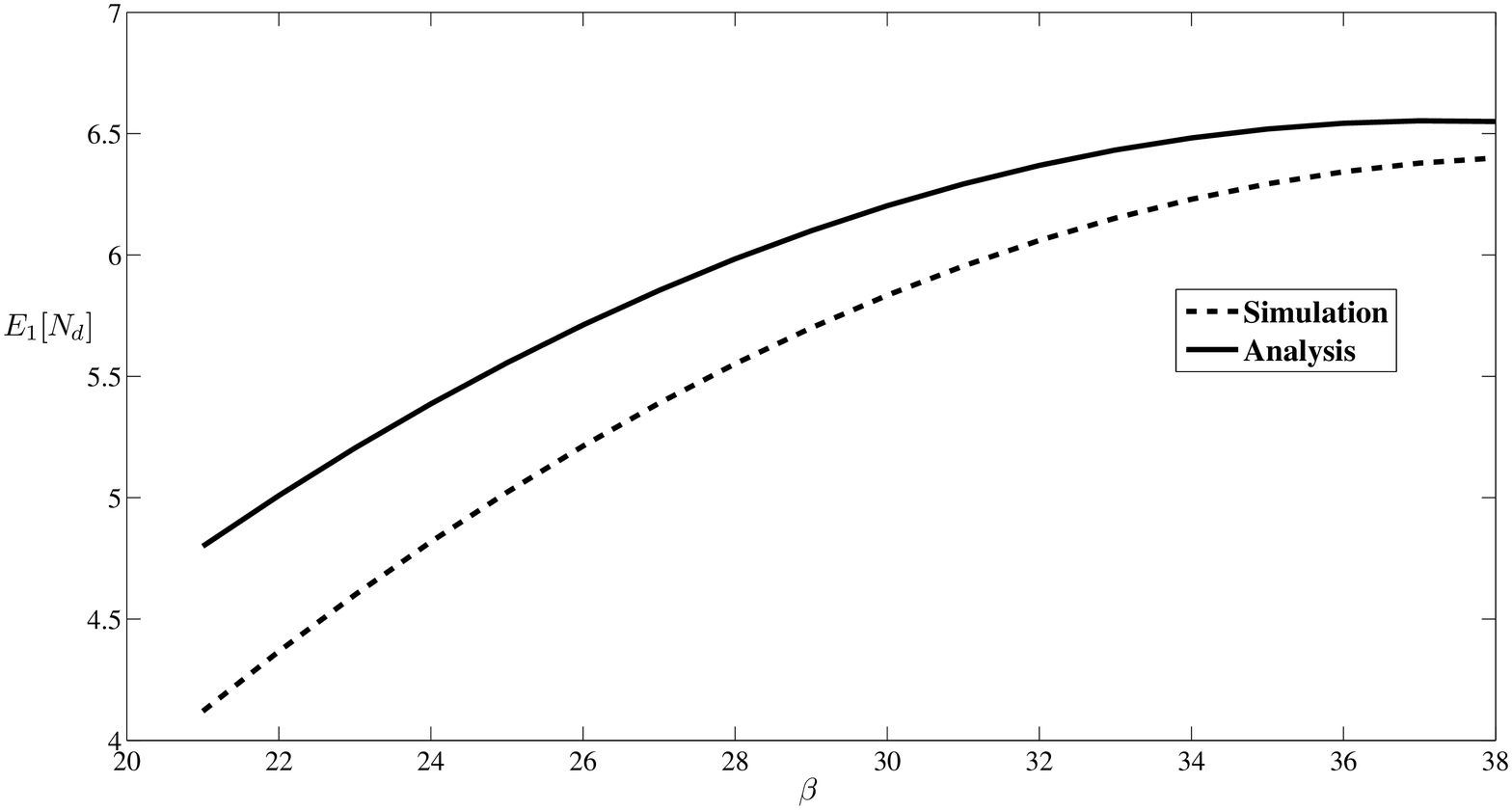}\label{fig_dualSPRT_perf-comp_same-SNR-2}}
\vspace{-0.2 cm} \caption{KTSLRT-SPRT:Comparison of $E_1[N_{d}]$ and $P_{MD}$ obtained via analysis and simulation.} 
\vspace{-0.8 cm}
\label{fig:comp_sym_analys_dec}
\end{figure}

\section{Asymptotic properties of the decentralized test}
\label{sec:DualSPRT_asy_opt}
In this section we prove asymptotic properties of the decentralized test. 

We use the following notation:
\[D_{tot}^0=L\lambda/2,\, D_{tot}^1=\sum_{l=1}^{L} (D(f_{1,l}||f_{0,l})-\lambda/2),\,r_l=\frac{\lambda/2}{D_{tot}^0},\, \rho_l=\frac{D(f_{1,l}||f_{0,l})-\lambda/2}{D_{tot}^1}.\]

Let $\mathcal{A}^i$ be the event that all the local nodes transmit $b_i$ when the true hypothesis is $H_i$. Also let $\Delta(\mathcal{A}^i)$ be the drift of the fusion center process $F_k$ when the $\mathcal{A}^i$ happens, i.e., $\Delta(\mathcal{A}^i)=E_i \left[(\log \frac{g_{\mu_1}(Y_k)}{g_{-\mu_0}(Y_k)})|\mathcal{A}^i\right]$. We use $\theta_i$ as the mean of the increments of $F_k$ when all the local nodes transmit wrong decisions under $H_i$. We will also need
{\allowdisplaybreaks
\begin{eqnarray}
\label{eq:ch3:proof_notn_tau}
\tau_l^*(a)&\stackrel{\Delta}=&\sup \left\{n\geq1:\widehat{W}_{n,l}\geq  a \right\}, \,
\tau^*(a)\stackrel{\Delta}=\max_{1 \leq l \leq L} \tau_l^* (ar_l).
\end{eqnarray}}
It can be seen that $\tau_l^*(a)$ is the last time $\widehat{W}_{n,l}$ will be above $a$ under $H_0$. Let $\tau_l(a)$ be the last exit time of a random walk, at node $l$, with drift $-\lambda/2-\epsilon$ from the interval $(\infty, a)$, for any $\epsilon>0$. $N_{0,l}^*(\epsilon)$ is the $N_{0}^*(\epsilon)$ in \eqref{eq:ch6:N_0(eps)_def} at local node $l$.

Let $F_k^*$ be another log likelihood ratio sum (at FC) with expected value of its components as $\theta_i$ under $H_i$, the worst case value of the mean of the increments of $F_k$. Let the increments of $F_k^*$ be $\xi_1^*,\ldots,\xi_k^*$ which are i.i.d. Under $H_0$, $\theta_0 >0$ and under $H_1$, $\theta_1<0$. 

In the rest of this section, local node thresholds are $\gamma_{0,l}=-r_l|\log c|, \gamma_{1,l}=\rho_l|\log c|$ and fusion center thresholds are $\beta_0=-{|\log c|}, \beta_1={|\log c|}$.

\begin{theorem}
\label{thm:ch3:edd}
Assume the following: for some $\alpha>1$,
\begin{itemize}
\item[(i)] $E[N_{0,l}^*(\epsilon)]< \infty$ for any $\epsilon >0$, for all $l=1,\ldots,L$,
\item[(ii)] $E[{(\log \frac{f_{1,l}(X_{1,l})}{f_{0,l}(X_{1,l})})}^{\alpha+1}]<\infty$ for all $l=1,\ldots,L$,
\item[(iii)] $E[|\xi_1^*|^{\alpha+1}]<\infty$.
\end{itemize}
Let $\rho^2_{i,l}<\infty$ for all $l$. Then under $H_i$, a.s.\ and in expectations
\begin{equation*}
\overline{\lim_{c \to 0}} \frac{N_d}{|\log c|} \leq \frac{1}{D_{tot}^i}+M_i, 
\end{equation*}
where $M_i=C_i/\Delta(\mathcal{A}^i)$, $C_0=-\left[1+\frac{E_0[|\xi_1^*|]}{D_{tot}^0} \right]$ and $C_1=\left[1+\frac{E_1[|\xi_1^*|]}{D_{tot}^1} \right]$.
\end{theorem}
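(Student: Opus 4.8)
The plan is to establish the almost-sure upper bound on $N_d/|\log c|$ first, and then upgrade it to convergence of the expectations by a uniform integrability argument, exactly as was done for the single-node case in Theorem \ref{thm:ch6:edd_prtys}. I will carry out the argument under $H_1$; the $H_0$ case is symmetric up to the sign changes already recorded in the definitions of $C_0$ and $C_1$. The key idea is the decomposition $N_d \le \tau^*(|\log c|) + \nu(|\log c|)$, where $\tau^*(|\log c|) = \max_l \tau_l^*(\rho_l|\log c|)$ is (an upper bound on) the last time before which \emph{some} local node can still be producing a wrong symbol at the FC, and $\nu(a)$ is the first passage time, after $\tau^*$, of the FC random walk whose increments have the worst-case drift $\Delta(\mathcal{A}^1)$; the $\nu(a)$ process is the one spelled out in the last row of Table \ref{tab:ch3:list_symbols_dec}. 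Once $n > \tau^*(|\log c|)$, all local nodes have settled on the correct decision $b_1$, so the FC sees increments that are stochastically at least as favourable as those of $F_k^*$, and hence $F_k$ must cross $\beta_1 = |\log c|$ within $\nu(|\log c|)$ more steps.

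The next step is to control each of the two pieces. For $\tau_l^*(\rho_l|\log c|)$: writing $\widehat{W}_{n,l} = S_{n,l} + \xi_{n,l}$ with $S_{n,l} = \sum_{k=1}^n \log \frac{f_{1,l}(X_{k,l})}{f_{0,l}(X_{k,l})} - n\lambda/2$ a random walk of positive drift $D(f_{1,l}\|f_{0,l}) - \lambda/2$ and $\xi_{n,l}/n \to 0$ with the tail of $N_{0,l}^*(\epsilon)$ controlled by hypothesis (i), a standard last-exit-time (renewal) estimate gives $\tau_l^*(\rho_l|\log c|)/|\log c| \to \rho_l/(D(f_{1,l}\|f_{0,l})-\lambda/2) = 1/D_{tot}^1$ a.s., and the maximum over the finitely many $l$ has the same limit $1/D_{tot}^1$. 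For $\nu(|\log c|)$: it is a first-passage time of a perturbed random walk with drift $\Delta(\mathcal{A}^1)$, so $\nu(|\log c|)/|\log c| \to 1/\Delta(\mathcal{A}^1)$ a.s. But the threshold that $F_{\nu}$ actually has to reach, measured from where $F$ sits at time $\tau^*$, is not $|\log c|$ but $|\log c|$ minus the accumulated value $F_{\tau^*}$; that accumulated value behaves like $-E_1[|\xi_1^*|]\cdot \tau^* \approx -E_1[|\xi_1^*|]\,|\log c|/D_{tot}^1$ in the worst case (all local nodes wrong up to $\tau^*$). Adding the overshoot budget $\bigl(1 + E_1[|\xi_1^*|]/D_{tot}^1\bigr)|\log c| = C_1 |\log c|$ divided by $\Delta(\mathcal{A}^1)$ to the first-passage part yields exactly the claimed $M_1 = C_1/\Delta(\mathcal{A}^1)$, so that $\overline{\lim}_{c\to 0} N_d/|\log c| \le 1/D_{tot}^1 + M_1$ a.s.

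For the convergence in expectation I would show $\{N_d/|\log c|\}_{c}$ is uniformly integrable, in fact that $\sup_c E_1[(N_d/|\log c|)^{\alpha}] < \infty$ for the $\alpha > 1$ in the hypotheses, which gives uniform integrability of the first power and hence $\overline{\lim}_{c\to0} E_1[N_d]/|\log c| \le 1/D_{tot}^1 + M_1$. This is where hypotheses (i)--(iii) and $\rho_{i,l}^2 < \infty$ do their work: (ii) with exponent $\alpha+1$ controls the $\alpha$-th moment of the local last-exit times $\tau_l^*$ via the moment version of the renewal/last-exit bound (the same bound used in the proof of Theorem \ref{thm:ch6:edd_prtys}(b)), (iii) with exponent $\alpha+1$ controls the $\alpha$-th moment of $\nu$ and of $\widehat{F}_n^* = \sum|\xi_k^*|$, and (i) removes the universal-coding perturbation uniformly. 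One applies Wald-type identities / the Lorden-style moment bounds for first-passage of random walks with positive drift to each piece and then combines via Minkowski.

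The main obstacle I anticipate is bookkeeping the \emph{worst-case} drift correctly: between time $0$ and $\tau^*$ the number of local nodes transmitting the correct $b_1$ changes at the order statistics $t_1,\dots,t_L$, so the FC drift is piecewise and could even be of the wrong sign initially; one must argue that replacing this genuine trajectory by the bound "$F$ drifts like $F^*$ (worst case $\theta_1$) until $\tau^*$, then like $\Delta(\mathcal{A}^1)$ afterwards" is legitimate for an \emph{upper} bound on $N_d$ — i.e. that it can only delay the crossing — and simultaneously that the resulting overshoot at $\tau^*$ is still $O(|\log c|)$ in $L^\alpha$, not larger, using hypothesis (iii). Making the coupling between $F_k$ on the event that local decisions have not all stabilized and the clean random walk $F_k^*$ rigorous, and checking that the error events on which this coupling fails have negligible contribution to the expectation (again by (i) and a Markov-inequality argument on $\tau^*$), is the technical heart; the rest is an assembly of the single-node Theorem \ref{thm:ch6:edd_prtys} machinery applied nodewise and at the FC.
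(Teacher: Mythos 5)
Your proposal is correct and follows essentially the same route as the paper's proof: the decomposition $N_d \le \tau^*(\cdot) + \nu(\cdot)$ with the last-exit time $\tau^*$ handled nodewise via $N_{0,l}^*(\epsilon)$ and a last-exit bound giving $1/D_{tot}^i$, the residual threshold $|\log c| - F_{\tau^*+1}$ controlled by stochastically dominating $F_k$ with the worst-case walk $\widehat{F}_k^* = \sum_k|\xi_k^*|$ to produce the $E_i[|\xi_1^*|]/D_{tot}^i$ term, and uniform integrability via $\alpha$-th moments under hypotheses (i)--(iii) to pass to expectations. The only cosmetic slip is calling $\Delta(\mathcal{A}^i)$ the ``worst-case'' drift (it is the all-nodes-correct drift; the worst case is $\theta_i$), but your actual use of the two drifts matches the paper.
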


\begin{IEEEproof}
See Appendix \ref{proof:thm:ch3:edd}.
\end{IEEEproof}

$D_{tot}^0$, $D_{tot}^1$ and $\Delta(\mathcal{A}^0)$, $\Delta(\mathcal{A}^1)$ given above indicate the advantage of using more local nodes $(L)$ and having higher drift $\lambda/2$ at the local nodes.

Next we consider the asymptotics of $P_{FA}$ and $P_{MD}$. Let 
\begin{equation*}
R_i=\min_{1\leq l \leq L} \Big(-\log \inf_{t \geq 0} E_i\Big[\exp \left({-t \log \frac{f_{1,l(X_{1,l})}}{f_{0,l}(X_{1,l})}}\right)\Big] \Big),
\end{equation*} 
and $\xi_k=\log \frac{g_{\mu_1}(Y_k)}{g_{-\mu_0}(Y_k)}$. Then $F_n=\sum_{k=1}^n \xi_k$.

Let $G_i$ and $\widehat{G}_i$ be the distributions of $|\xi_1^*|$ and $\xi_1^*$ respectively. Also let $g_i$ and $\widehat{g}_i$ be the moment generating functions of $|\xi_1^*|$ and $\xi_1^*$. Let $\Lambda_i(\alpha)=\sup_{\lambda}(\alpha \lambda-\log g_i(\lambda))$, $\widehat{\Lambda}_i(\alpha)=\sup_{\lambda}(\alpha \lambda-\log \widehat{g}_i(\lambda))$ and take $\alpha^{+}_i=\text{ess }\sup |\xi_1^*|$. Let
\begin{equation}
\label{eq:thm_pe_s-param}
s_i(\eta)=\left\{
\begin{array}{ll}
\frac{\eta}{\alpha_i^{+}} &, \text{ if } \eta \geq \Lambda_i(\alpha^{+}_i),\\
\frac{\eta}{\Lambda^{-1}_i(\eta)} &, \text{ if } \eta \in (0, \Lambda_i(\alpha^{+}_i)).
\end{array}
\right.
\end{equation}
\begin{theorem}
\label{thm:ch3:pe-bayes}
Let $g_i(\lambda)<\infty$ in a neighbourhood of zero. Then,
\begin{enumerate}
\item[(a)]\label{thm:ch3:peb} $\displaystyle \lim_{c \to 0}\, \frac{P_{FA}}{c}=0$ if for some $0<\eta<R_0$, $s_0(\eta) >1$.
\item[(b)]\label{thm:ch3:pec} $\displaystyle \lim_{c \to 0}\, \frac{P_{MD}}{c}=0$ if for some $0<\eta<R_1$, $s_1(\eta) >1$. 
\end{enumerate}
\end{theorem}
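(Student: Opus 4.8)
The plan is to prove part (a) (the bound on $P_{FA}$ under $H_0$); part (b) then follows by the symmetric argument under $H_1$, interchanging the two fusion-centre thresholds, $R_0$ with $R_1$, the worst-case drift $\theta_0>0$ with $\theta_1<0$, and $(\Lambda_0,\alpha_0^{+},s_0)$ with $(\Lambda_1,\alpha_1^{+},s_1)$; throughout I use that the noise $\{Z_k\}$ is independent of $\{X_{k,l}\}$. First I reduce a false alarm to a large-deviation event. A false alarm means the fusion statistic $F_k$ reaches $|\log c|$ before $-|\log c|$. Under $H_0$ every $\widehat W_{n,l}$ has strictly negative mean drift (by \eqref{eq:ch6:pointwise_universality} and \eqref{eq:ch6:div_condn_cont}), so it is eventually below $\gamma_{1,l}=\rho_l|\log c|$ and node $l$ then transmits the correct symbol $b_0$; once every node does so the increment of $F_k$ has its most negative mean and $F_k$ can no longer rise to $|\log c|$. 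Hence, up to $O(1)$ overshoot terms, a false alarm forces $F_k$ to reach $|\log c|$ no later than $\tau^{*}(|\log c|)=\max_{l}\tau^{*}_{l}(r_l|\log c|)$, and on that window $F_k$ is dominated by the worst-case fusion walk, which in turn is bounded by $\widehat F_{n}^{*}=\sum_{j\le n}|\xi_{j}^{*}|$ (i.i.d.\ increments, essential supremum $\alpha_0^{+}$). Since $\widehat F^{*}$ is a functional of the noise alone it is independent of $\{\widehat W_{n,l}\}$, so
\[
P_{FA}\ \le\ \sum_{n\ge 1} P_0\!\left(\widehat F_{n}^{*}\ge|\log c|\right)\,P_0\!\left(\tau^{*}(|\log c|)\ge n\right)\ +\ o(c).
\]

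Next I would bound the two factors. For the local factor, write $\widehat W_{n,l}=-n\lambda/2+\zeta_{n,l}$ with $\zeta_{n,l}$ the pointwise redundancy of the local universal code; on $\{N_{0,l}^{*}(\epsilon)<n\}$ we have $|\zeta_{n,l}|\le\epsilon n$, so $\widehat W_{n,l}$ is dominated by the drift-$(-\lambda/2-\epsilon)$ random walk defining $\tau_l(\cdot)$, and since $E[N_{0,l}^{*}(\epsilon)]<\infty$ (which holds for KT-AE and LZ78 by \eqref{eq:stron_ver_po_uni}) the exceptional set contributes negligibly. A Chernoff bound for that walk's last exit above $r_l|\log c|$, with the union over $l=1,\dots,L$, then gives $P_0(\tau^{*}(|\log c|)\ge n)\le\exp(-\eta\,n\,(1+o(1)))$ for every $\eta<R_0$ — this is the role of the exponent $R_0$, essentially the same Chernoff computation as in Theorem \ref{thm:ch6:pfa_pmd_prtys}(2) transplanted to a local node. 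For the fusion factor, a Cramér bound for the bounded-increment walk $\widehat F^{*}$ gives $P_0(\widehat F_{n}^{*}\ge|\log c|)=0$ when $n<|\log c|/\alpha_0^{+}$ (one cannot accumulate $|\log c|$ in fewer bounded steps) and $P_0(\widehat F_{n}^{*}\ge|\log c|)\le\exp(-n\Lambda_0(|\log c|/n))$ otherwise.

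Substituting the two bounds and writing $n=|\log c|/v$, each summand has exponent $|\log c|\,v^{-1}(\eta+\Lambda_0(v))$ over the admissible range $v\le\alpha_0^{+}$; optimising over $v$ (and over the residual regime where the fusion event is typical) the exponent of the whole sum works out to $|\log c|\,s_0(\eta)$, with the two branches of \eqref{eq:thm_pe_s-param} corresponding to whether the constraint $v\le\alpha_0^{+}$ is active (the walk runs at maximal speed, branch $\eta\ge\Lambda_0(\alpha_0^{+})$) or the optimum is interior at $v=\Lambda_0^{-1}(\eta)$ (branch $\eta\in(0,\Lambda_0(\alpha_0^{+}))$). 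Consequently, if for some $0<\eta<R_0$ one has $s_0(\eta)>1$, then $P_{FA}\le c^{\,s_0(\eta)}(1+o(1))=o(c)$, which is (a). The identical argument under $H_1$ — now with $\widehat F^{*}$ bounding $|F_n|$ on the window before every node has correctly switched to $b_1$, and with $R_1,\Lambda_1,\alpha_1^{+},s_1$ in place of their $i=0$ analogues — yields (b).

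The main obstacle is the combination in the last step together with the quantitative form of the reduction: making rigorous the domination of $F_k$ on $[0,\tau^{*}(|\log c|)]$ by $\widehat F^{*}$, and controlling the fusion-centre and local-threshold overshoots so that the residual contributions are genuinely $o(c)$ rather than merely lower-order in the exponent, and then organising the two large-deviation bounds so that their combination reproduces exactly $s_i(\eta)$ (and not just a bound of the same exponential order). The uniform-in-$n$ control of the code-redundancy term $\zeta_{n,l}$, which is where the finiteness and moment hypotheses on $N_{0,l}^{*}(\epsilon)$ enter, is a secondary technical point. The whole argument parallels and extends the Bayes-optimality analysis of DualSPRT in \cite[Sec.\ III]{Jithin_JSAC-submitted_arxiv}.
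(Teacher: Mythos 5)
Your handling of the event that the fusion statistic crosses $|\log c|$ \emph{before} the last local exit time $\tau^*(-|\log c|)$ is sound and runs parallel to the paper: dominate $F_k$ by the worst-case sum $\widehat F^*_n=\sum_{k\le n}|\xi_k^*|$, use the exponential tail $P_0[\tau^*(-|\log c|)>t]\le k_1e^{-\eta t}$ for $\eta<R_0$ (this is exactly the content of Lemma \ref{lem:exp_finit_last_ext} plus Markov), and extract the exponent $s_0(\eta)$ from the tail of the randomly stopped sum $\widehat F^*_{\tau^*}$. The paper obtains this last step by citing \cite[Theorem 1, Remark 1]{Borovkov_1995}, which needs no independence between $\tau^*$ and the increments; your hand-made substitute via $\sum_n P_0(\widehat F^*_n\ge|\log c|)P_0(\tau^*\ge n)$ and the optimization over $n=|\log c|/v$ does recover an exponent at least $s_0(\eta)$ (with $h(v)=(\Lambda_0(v)+\eta)/v$, use $h(v)\ge\eta/v$ for $v\le\Lambda_0^{-1}(\eta)$ and $h(v)\ge\Lambda_0(v)/v$ together with monotonicity of $\Lambda_0(v)/v$ for $v\ge\Lambda_0^{-1}(\eta)$), so that piece is acceptable modulo justifying a pathwise domination $F_k\le F^*_k$ by a copy of $F^*$ that is simultaneously independent of the local statistics.

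The genuine gap is your dismissal of the post-$\tau^*$ contribution as an ``$O(1)$ overshoot'' that is automatically $o(c)$. The assertion that once every node transmits $b_0$ the statistic ``can no longer rise to $|\log c|$'' is false: a random walk with negative drift $\Delta(\mathcal{A}^0)$ started from the random level $F_{\tau^*(-|\log c|)+1}$ still reaches $|\log c|$ with probability of order $\exp\bigl(-s'(|\log c|-F_{\tau^*(-|\log c|)+1})\bigr)$, where $s'>0$ solves $E_0\bigl[\exp\bigl(s'\log \frac{g_{\mu_1}(Y_k)}{g_{-\mu_0}(Y_k)}\bigr)\,\big|\,\mathcal{A}^0\bigr]=1$; this is a second large-deviation event, not an overshoot. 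The paper devotes half the proof to it: it splits on $\{F_{\tau^*(-|\log c|)+1}\le r|\log c|\}$ versus its complement, bounds the first branch by $c^{(1-r)s'}$ via the Wald likelihood-ratio bound and the second by $c^{rs_0(\eta)}$ via another application of the stopped-sum estimate, and must then choose $r$ with $1/s_0(\eta)<r\le 1-1/s'$. Your proposal contains neither the parameter $s'$ nor any condition guaranteeing such an $r$ exists, so as written it does not establish $P_0[\text{FA after }\tau^*(-|\log c|)]=o(c)$; you need to add this second decomposition (or an equivalent argument) to close the proof, and the same remark applies to the $P_{MD}$ half.
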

\begin{IEEEproof}
See Appendix \ref{proof:thm_pe}.
\end{IEEEproof}
\begin{Remarks}
\label{rem:asy_opt_DualSPRT_rem-contraction-pri}
When $\alpha_i^{+}=\infty$ which is generally true, $\Lambda_i(\alpha^{+}_i)=\infty$ (\cite{Borovkov_1995}) and in Theorem \ref{thm:ch3:pe-bayes}(a) and \ref{thm:ch3:pe-bayes}(b) we need to consider only $R_i<\Lambda_i(\alpha^{+}_i)$. Also $\Lambda_i$ can be computed from $\widehat{\Lambda}_i$ using Contraction principle in Large Deviation theory (\cite{Dembo_LDP}).
\end{Remarks}

\begin{Remarks}
In \cite[Lemma 1-Appendix A]{Banerjee_WCOM}, it is proved that log likelhood ratio converts a large class of distributions into light tailed distributions and hence $g_i(\lambda)$ is finite in a neighbourhood of zero. For instance, consider a regularly varying distribution for $Z_k$, $P(Z_k>x)=l'(x)x^{-\alpha}$, where $l'(x)$ is a slowly varying function and $\alpha>0$. Then,
$\log g_{\mu_1}(x)/g_{-\mu_0}(x)=\Big(l(x-\mu_1)/l(x+\mu_0)\Big) (x-\mu_1)^{-\alpha}(x+\mu_0)^{\alpha}$ $\leq x^{\beta_1+\alpha\beta_2}$ for large $x$, any $\beta_1>0$ and an appropriately chosen $\beta_2>0$. This proves the conditions for \cite[Lemma 1]{Banerjee_WCOM} and hence exponential tail for $\widehat{G}_{i}(t)$ follows.
\end{Remarks}

Theorem \ref{thm:ch3:pe-bayes} provides optimal rate of convergence of $P_{FA}$ and $P_{MD}$ (see, e.g., \cite[Theorem 2.11.2]{Govindarajulu_SS}).

\subsection{Example-Gaussian distribution}
\label{subsec:asy_prop_ex_gau}
In the following we apply Theorems \ref{thm:ch3:edd} and \ref{thm:ch3:pe-bayes} when the fusion center noise is Gaussian $\mathcal{N}(0,\sigma^2_{FC})$. We take $\mu_1=\mu_0=\mu>0$ and $b_1=-b_0=b>0$. For Theorem \ref{thm:ch3:edd}, $\Delta(\mathcal{A}^0)=-2\mu Lb/ \sigma^2_{FC}$ and $\Delta(\mathcal{A}^1)=2\mu Lb/ \sigma^2_{FC}$. Therefore $M_0$ and $M_1$ in Theorem \ref{thm:ch3:edd} $\to 0$ if $L \to \infty$ and/or $b \to \infty$ and we can obtain performance as close to asymptotic optimality as we wish. This also happens if $\sigma^2_{FC} \to 0$.

Using Remark \ref{rem:asy_opt_DualSPRT_rem-contraction-pri}, the condition in Theorem \ref{thm:ch3:pe-bayes}(a) is $\sigma^2_{FC}\eta/(4\mu^2\sqrt{2\eta}-2\mu L b)> 1$ for some $0<\eta<R_0$ and that for Theorem \ref{thm:ch3:pe-bayes}(b) is $\sigma^2_{FC}\eta/(4\mu^2\sqrt{2\eta}+2\mu L b)> 1$ for some $0<\eta<R_1$. Combining these two, it is sufficient to satisfy later condition with $0<\eta<\min(R_0,R_1)$. This specifies upper-bounds for the choice of $\mu,L$ and $b$.
\section{Conlusions}
\label{sec:ch6:uni_seq_con}
The problem of universal sequential hypothesis testing is very useful in practical applications, e.g., quickest detection with SNR uncertainty in Cognitive Radio systems. We have used universal lossless source codes for learning the underlying distribution. The algorithm is first proposed for discrete alphabet and almost sure finiteness of the stopping time is proved. Asymptotic properties of probability of error and stopping times are also derived. Later on the algorithm is extended to continuous alphabet via uniform quantization. We have used Lempel-Ziv code and KT-estimator with Arithmetic Encoder as universal lossless codes. From the performance comparisons, it is found that KT-estimator with Arithmetic Encoder (KT-AE) always performs the best. We have compared this algorithm with other universal hypothesis testing schemes also and found that KT-AE performs the best. Finally we have extended these algorithms to decentralized setup and studied their asymptotic performance.

\appendices
\section{Proof of Proposition \ref{prop:ch6:as_st}}
\label{proof:prop:ch6:as_st}
(a)\quad Since $P_0(N < \infty)\geq P_0(N^0 < \infty)$, we show $P_0(N^0 < \infty)=1$. 

From our assumptions, we have, as $n \to \infty$,
\begin{equation*}
\frac{\widehat{W}_n}{n}={}-\frac{L_n(X_1^n)}{n}-\frac{\log P_0(X_1^n)}{n}-\frac{\lambda}{2} \to -\frac{\lambda}{2} \quad \text{ in probability.}
\end{equation*}

Therefore,
\begin{equation*}
P_0[N^0 < \infty]\geq P_0[\widehat{W}_n< \log \beta]=P_0\left[\frac{\widehat{W}_n}{n} <\frac{\log \beta}{n}\right] \to 1.
\end{equation*}
(b)\quad The proof follows as in (a), observing that $P_1(N < \infty)\geq P_1(N^1 < \infty)$ and $\widehat{W}_n/n \to D(P_1||P_0)-\lambda/2 > 0$ in probability. \hfill \QED

\section{Proof of Theorem \ref{thm:ch6:pfa_pmd_prtys}}
\label{proof:thm:ch6:pfa_pmd_prtys}
(1)\quad We have, 
\[P_{FA}=P_0(N^1<N^0)\leq P_0(N^1<\infty).\]
$P_0(N^1<\infty)\leq \alpha$ is proved in \cite{Tony_ISIT2008} and is provided here for the sake of completeness. It uses the fact that the universal codes we consider are prefix-free and hence satisfy the Kraft's inequality (\cite{Cover_EIT_book}). Thus,
{\allowdisplaybreaks
\begin{eqnarray}
P_0(N^1<\infty)&=&\sum_{n=1}^{\infty} P_0[N^1=n]=\sum_{n=1}^{\infty}P_0\left[-L_n(X_1^n)-\log P_0 (X_1^n) -n\frac{\lambda}{2} \geq -\log \alpha \right]\nonumber\\
&\leq & \sum_{n=1}^{\infty} \sum_{x_1^n:P_0(x_1^n) \leq 2^{\log \alpha -L_n(X_1^n)-n\frac{\lambda}{2}}}P_0(x_1^n) \nonumber \\
& \leq & \sum_{n=1}^{\infty} \sum_{x_1^n:P_0(x_1^n) \leq 2^{\log \alpha -L_n(X_1^n)-n\frac{\lambda}{2}}} \alpha\, 2^{-L_n(X_1^n)-n\frac{\lambda}{2}} \nonumber \\
&\stackrel{(a)}\leq & \sum_{n=1}^{\infty} {\alpha}\,{2^{-n\lambda/2}}=\frac{\alpha}{2^{\lambda/2}-1} \leq \alpha. \nonumber
\end{eqnarray}}
where (a) follows from Kraft's inequality.\\
(2)\quad Let $\mathcal{A}_{n_1}(\epsilon) = \{x_1^{\infty}: \sup_{n \geq n_1} |-L_n(x_1^n)-\log P_1(x_1^n)| < n \epsilon \}$. We have, for any $n_1 >0$,
{\allowdisplaybreaks
\begin{eqnarray}
\label{eq:ch6:pmd_proof_exp1}
P_{MD}=P_1(N^0<N^1)&= &P_1[N^0<N^1;N^0 \leq n_1]+P_1[N^0<N^1;\,N^0>n_1;\,{\mathcal{A}_{n_1}(\epsilon)}] \nonumber \\
& & +\:P[N^0< N^1;\,N^0>n_1;\,{\mathcal{A}_{n_1}^c(\epsilon)}].
\end{eqnarray}}
Since the universal code satisfies the stronger version of pointwise universality, for a given $\epsilon >0$, we can take $M_1$ such that $P_1(\mathcal{A}_{n_1}^c(\epsilon))=0$ for all $n_1 \geq M_1$. In the following we take $n_1 \geq M_1$.

Next consider the second term in (\ref{eq:ch6:pmd_proof_exp1}). From Proposition \ref{prop:ch6:as_st}, $P_1[N^1 < \infty]=1$ and hence,
\begin{eqnarray}
\label{eq:ch6:pmd_proof_exp2}
P_1[N^0<N^1;N^0 >n_1\,;{\mathcal{A}_{n_1}(\epsilon) }]
&\leq & P_1[N^0<\infty;N^0 >n_1\,;\mathcal{A}_{n_1}(\epsilon)].
\end{eqnarray}
\noindent Under $\mathcal{A}_{n_1}(\epsilon)$, for $n \geq n_1$, $\widehat{W}_n$ satisfies \begin{equation}
\label{eq:ch6:expn_proof_bound_llr_1}
\left(-L_n(X_1^n)-\log P_1(X_1^n)\right)+\left(\log P_1(X_1^n)-\log P_0(X_1^n)-n\frac{\lambda}{2}\right)\geq \log \frac{P_1(X_1^n)}{P_0(X_1^n)}-n\frac{\lambda}{2}-n\epsilon.
\end{equation}
R.H.S.\ is a random walk with positive drift, $D(P_1||P_0)-(\lambda/2+\epsilon)$ (since $D(P_1||P_0)>\lambda$ and $\epsilon$ is chosen $<\lambda/2$). Let $N^1_0$ be the stopping time of this random walk to cross $-|\log \beta|$. Then $P_1(n_1<{N^0} < \infty;\,\mathcal{A}_{n_1}(\epsilon))\leq P_1({N^1_0} < \infty)$. Now, from \cite[p.~79]{POO_BOOK_1},
\begin{eqnarray}
\label{eq:ch6:pmd_asymp_rw_bound}
P_1[{N^1_0}<\infty]\leq e^{s'|\log \beta|},
\end{eqnarray}
where $s'$ is the solution of $E_1[e^{s'\,\left(\log \frac{P_1(X_1)}{P_0(X_1)}-\frac{\lambda}{2}-\epsilon \right)}]=1$ and $s'<0$. 

Finally consider $P_1[N^0<N^1;N^0 \leq n_1] \leq P_1[N^0 \leq n_1]$. Since we have finite alphabet, $L_n(X_1^n) \leq M_2$ for $n=1,\ldots,n_1$ for some $M_2 < \infty$ and,
{\allowdisplaybreaks
\begin{eqnarray}
P_1[N^0 \leq n_1] &\leq & \sum_{n=1}^{n_1}P_1 \left[-L_n(X_1^n) -\log P_0(X_1^n)-n\frac{\lambda}{2} \leq -|\log \beta| \right] \nonumber \\
&\leq & \sum_{n=1}^{n_1}P_1 \left[-M_2 -\log P_0(X_1^n)-n\frac{\lambda}{2} \leq -|\log \beta| \right] \nonumber \\
\label{eq:ch6:pmd_proof_expn}
&=& \sum_{n=1}^{n_1} P_1 \left[\log P_0(X_1^n) \geq |\log \beta|-M_2-n\frac{\lambda}{2} \right]=0,
\end{eqnarray}}
for all $\beta < \beta_2$, for some $\beta_2 >0$.

Therefore as $\beta \to 0$, using (\ref{eq:ch6:pmd_proof_exp1}), (\ref{eq:ch6:pmd_proof_exp2}), (\ref{eq:ch6:pmd_asymp_rw_bound}) and (\ref{eq:ch6:pmd_proof_expn}),
\begin{equation*}
\pushQED{\qed}
P_{MD}\leq \beta^{s}= \mathcal{O}(\beta^{s}),\quad s=-s'>0.\qedhere \popQED
\end{equation*}
\vspace*{-1 cm}
\section{Proof of Theorem \ref{thm:ch6:edd_prtys}}
\label{proof:thm:ch6:edd_prtys}
(a)\quad We have,
{\allowdisplaybreaks
\begin{eqnarray}
N &=& \min \{N^0,N^1\} \nonumber \\
&=& N^0 \,\mathbb{I}\{N^0 \leq N^1\}+ N^1\, \mathbb{I}\{N^1 > N^0\}. \nonumber
\end{eqnarray}}
From Theorem \ref{thm:ch6:pfa_pmd_prtys}, $P_{FA} \to 0$ as $\alpha \to 0$, under $H_0$, and hence,
\begin{equation}
\label{eq:ch6:proof_e_0_1}
\lim_{\alpha,\beta \to 0} \frac{N}{|\log \beta|}=\lim_{\alpha,\beta \to 0} \frac{N^0\, \mathbb{I}\{N^0 \leq N^1\}}{|\log \beta|}\, \text{ a.s.}
\end{equation}

Define for, $0<r<1$, a small constant,
\begin{equation*}
\mathcal{A}_r=\{w:\sup_{n \leq N_0^*(\epsilon)} \widehat{W}_n \leq r |\log \beta|<|\log \alpha|\}.
\end{equation*}
\noindent Then, because for $n > N_0^*(\epsilon)$, $\widehat{W}_n \leq -n(\lambda/2)+n\epsilon$,
\begin{equation*}
N^0\,\mathbb{I}\{N^0 \leq N^1\}\mathbb{I}\{\mathcal{A}_r\} \leq N_0^*(\epsilon)+\frac{1+r}{\frac{\lambda}{2}-\epsilon}|\log \beta|\,\mathbb{I}\{\mathcal{A}_r\}.
\end{equation*}
\noindent Since $P_0(\mathcal{A}_r) \to 1$ as $\alpha,\beta \to 0$,
\begin{equation*}
\limsup_{\alpha,\beta \to 0} \frac{N^0}{|\log \beta|}=\limsup_{\alpha,\beta \to 0} \frac{N^0\,\mathbb{I}\{N^0 \leq N^1\}}{|\log \beta|} \leq \limsup_{\alpha,\beta \to 0} \frac{N_0^*(\epsilon)}{|\log \beta|}+\frac{1+r}{\frac{\lambda}{2}-\epsilon} \to \frac{1+r}{\frac{\lambda}{2}-\epsilon} \text{ a.s.}
\end{equation*}
Taking $r \to 0$ and $\epsilon \to 0$ we get
\begin{equation}
\label{eq:ch5:proof_e_0_2}
\limsup_{\alpha,\beta \to 0} \frac{N}{|\log \beta|}=\limsup_{\alpha,\beta \to 0} \frac{N^0\,\mathbb{I}\{N^0 \leq N^1\}}{|\log \beta|} \leq \frac{2}{\lambda} \text{ a.s.}
\end{equation}

Next define
\begin{equation*}
\mathcal{B}_r=\{w:\inf_{n \leq N_0^*(\epsilon)} \widehat{W}_n \geq -r |\log \beta|<|\log \alpha|\},
\end{equation*}
for $r$ a small positive constant $<1$. Then $P(\mathcal{B}_r) \to 1$ as $\beta \to 0$ and hence 
\begin{equation*}
\mathbb{I}\{\mathcal{B}_r\} \frac{(1-r)|\log \beta|}{\frac{\lambda}{2}+\epsilon} \leq N^0
\end{equation*}
implies
\begin{equation}
\label{eq:ch5:proof_e_0_3}
\frac{(1-r)}{\frac{\lambda}{2}+\epsilon} \leq \liminf_{\alpha,\beta \to 0} \frac{N^0}{|\log \beta|} \text{ a.s.}
\end{equation}
Taking $r \to 0$ and $\epsilon \to 0$ from (\ref{eq:ch6:proof_e_0_1}), (\ref{eq:ch5:proof_e_0_2}) and (\ref{eq:ch5:proof_e_0_3}) we get
\begin{equation*}
\lim_{\alpha, \beta \to 0} \frac{N}{|\log \beta|}=\lim_{\alpha, \beta \to 0} \frac{N^0}{|\log \beta|}=\frac{2}{\lambda} \text{ a.s.}
\end{equation*}

Observe that
\begin{equation*}
N^0 \leq N_0^*(\epsilon)+\frac{|\widehat{W}_{N_0^*(\epsilon)}|+|\log \beta|}{\frac{\lambda}{2}-\epsilon}.
\end{equation*}
\noindent Then by $C_r$-inequality, for $p \geq 1$,
\begin{equation}
\label{eq:ch5:proof_e_0_4}
E_0[(N^0)^p]\leq C_p \left[E_0[(N_0^*(\epsilon))^p]+\frac{1}{(\frac{\lambda}{2}-\epsilon)^p}(E_0[{|\widehat{W}_{N_0^*(\epsilon)}|}^p]+{|\log \beta|}^p) \right],
\end{equation}
\noindent where $C_p >0$ depends only on $p$. Also,
{\allowdisplaybreaks
\begin{eqnarray*}
E_0[|\widehat{W}_{N_0^*(\epsilon)}|^p] &=& E_0 \left[{\Big |-L_{N_0^*(\epsilon)}(X_1^{N_0^*(\epsilon)})-\log P_0(X_1^{N_0^*(\epsilon)})-N_0^*(\epsilon) \frac{\lambda}{2} \Big|}^p \right]\\
&\leq & C_p \left( E_0[({L_{N_0^*(\epsilon)}(X_1^{N_0^*(\epsilon)})})^p]+E_0[{|\log P_0(X_1^{N_0^*(\epsilon)})|}^p]+\frac{\lambda^p}{2^p} E_0[{(N_0^*(\epsilon))}^p]\right).
\end{eqnarray*}}
Furthermore, $E_0[({L_{N_0^*(\epsilon)}(X_1^{N_0^*(\epsilon)})})^p] <\infty$ if $E_0[{|\log P_0(X_1^{N_0^*(\epsilon)})|}^p] < \infty$. Since $N_0^*(\epsilon)$ is not a stopping time, for $E_0[{|\log P_0(X_1^{N_0^*(\epsilon)})|}^p] < \infty$, we need $E_0[{|\log P_0(X_1)|}^{p+1}] < \infty$ and $E_0[({N_0^*(\epsilon)})^p]\allowbreak< \infty$ (see, e.g., \cite[p.~36]{GUT_BOOK_2009}).

Thus from (\ref{eq:ch5:proof_e_0_4}), for a fixed $\epsilon$,
\begin{equation*}
\frac{E_0[({N^0})^p]}{{|\log \beta|}^p} \leq C_p \left[ \frac{E_0[{(N_0^*(\epsilon))}^p]}{{|\log \beta|}^p}+ \frac{1}{(\frac{\lambda}{2}-\epsilon)^p} \frac{E_0[{|\widehat{W}_{N_0^*(\epsilon)}|}^p]}{{|\log \beta|}^p}+1\right],
\end{equation*}
and hence $\{\frac{(N^0)^p}{|\log \beta|^p},0<\beta<1\}$ is uniformly integrable. Therefore, as $\beta \to 0$, (fix $\epsilon >0$ and then take $\epsilon \downarrow 0$)
\begin{equation*}
\frac{E_0[(N^0)^q]}{|\log \beta|^q} \to {\left(\frac{2}{\lambda} \right)}^q
\end{equation*}
and
\begin{equation*}
\frac{E_0[(N)^q]}{|\log \beta|^q} \to {\left(\frac{2}{\lambda} \right)}^q,
\end{equation*}
for all $0 <q \leq p$.\\
(b)\quad The proof for (b) follows as in (a) with the following  modifications. Interchange $N^0$ with $N^1$ in (\ref{eq:ch6:proof_e_0_1}). Use $N_1^*(\epsilon)$ instead of $N_0^*(\epsilon)$. For the convergence of moments we need $E_1[{(\log P_i (X_1))}^{p+1}] < \infty$ for $i=0,1$. 

To get the CLT result, we write
\begin{equation*}
\widehat{W}_n=[-L_n(X_1^n)-\log P_1(X_1^n)]+\sum_{k=1}^n \left( \log \frac{P_1(X_k)}{P_0(X_k)}-\frac{\lambda}{2} \right).
\end{equation*}
Under our assumptions, $\widehat{W}_n$ becomes the perturbed random walk in \cite[Chapter 6]{GUT_BOOK_2009} and from \cite[Chapter 6, Theorem 2.3]{GUT_BOOK_2009}, if $\rho^2< \infty$ and $(L_n(X_1^n)+\log P_1(X_1^n))/\sqrt{n} \to 0$ a.s., we get that, as $\alpha, \beta \to 0$,
\begin{equation*}
\frac{N^1-{\delta}^{-1}|\log \alpha|}{\sqrt{\rho^2\delta^{-3}|\log \alpha|}} \to \mathcal{N}(0,1) \text{ in distribution.} 
\end{equation*}
Now, since $|N-N^1| \to 0$ a.s., as $\alpha+\beta \to 0$, we obtain the result for $N$.\hfill \QED

\section{Proof of Theorem \ref{thm:ch3:edd}}
\label{proof:thm:ch3:edd}
We will prove the theorem under $H_0$. The proof under $H_1$ will follow in the same way.

Consider a random walk formed at the FC when the drift is $\Delta(\mathcal{A}^0)$ ($-$ve under our assumptions). We take this random walk independent of transmissions from the local nodes. Let $\nu(a)$ be the stopping time when this random walk starting at zero first crosses $a$. Then,
\[N_d \leq N^0_d \leq \tau^*(-|\log c|)+\nu(-|\log c|-F_{\tau^*(-|\log c|)+1}).\]
Therefore,
\begin{equation}
\label{eq:ch3:proof_e_0_0th_term}
\frac{N_d}{|\log c|} \leq \frac{\tau^*(-|\log c|)}{|\log c|}+\frac{\nu(-|\log c|-F_{\tau^*(-|\log c|)+1})}{|\log c|}.
\end{equation}

Now we consider the first term in \eqref{eq:ch3:proof_e_0_0th_term}. We have at local node $l$, 
\begin{equation}
\label{eq:proof_pe_x}
\frac{\tau^*_l(-r_l|\log c|)}{|\log c|} \leq \frac{N_{0,l}^*(\epsilon)}{|\log c|}+\frac{\tau_l(-(r_l|\log c|-\widehat{W}_{N_{0,l}^*(\epsilon)+1,l}))}{|\log c|}.
\end{equation}
For a fixed $\epsilon >0$, $N_{0,l}^*(\epsilon)$ is a fixed random variable with a finite mean. Therefore, the first term in R.H.S.\ approaches zero a.s.\ and in $L_1$ when $c \to 0$. Now,
\begin{eqnarray}
\frac{\tau_l(-(r_l|\log c|-\widehat{W}_{N_{0,l}^*(\epsilon)+1,l}))}{|\log c|}=\frac{\tau_l(-(r_l|\log c|-\widehat{W}_{N_{0,l}^*(\epsilon)+1,l}))}{|\log c|-\widehat{W}_{N_{0,l}^*(\epsilon)+1,l}}.\;\frac{|\log c|-\widehat{W}_{N_{0,l}^*(\epsilon)+1,l}}{|\log c|}.
\end{eqnarray}

Since $E_0\left[ \left(\log \frac{f_{1,l}(X_{1,l})}{f_{0,l}(X_{1,l})} \right)^2 \right]< \infty$ and $E[N_{0,l}^*(\epsilon)]<\infty$ for a given $\epsilon >0$, $E[|\widehat{W}_{N_{0,l}^*(\epsilon),l}|]<\infty$. Also from \cite[Remark 4.4, p.~90]{GUT_BOOK_2009} as $c \to 0$, $\tau_l(-r_l|\log c|) \to \infty$ a.s. and $\displaystyle \lim_{c \to 0}\frac{\tau_l(-r_l|\log c|)}{|\log c|}=\frac{1}{D_{tot}^0}$ a.s. for any $\epsilon>0$. Therefore,
\begin{eqnarray}
\label{eq:last-ext-convg1}
\lim_{c \to 0} \frac{\tau_l(-(r_l|\log c|-\widehat{W}_{N_{0,l}^*(\epsilon)+1,l}))}{|\log c|-\widehat{W}_{N_{0,l}^*(\epsilon)+1,l}}=\frac{1}{D_{tot}^0}\text{ a.s.}
\end{eqnarray}
We also have
\begin{eqnarray}
\label{eq:last-ext-convg2}
\frac{|\log c|-\widehat{W}_{N_{0,l}^*(\epsilon)+1,l}}{|\log c|} \to 1 \text{ a.s.}
\end{eqnarray}
From \eqref{eq:last-ext-convg1} and \eqref{eq:last-ext-convg2}, 
\begin{equation}
\label{eq:edd_dec_proof_obe_step}
\overline{\lim_{c\to 0}}\frac{\tau^*_l(-r_l|\log c|)}{|\log c|}\leq \overline{\lim_{c\to 0}}\frac{\tau_l(-(r_l|\log c|-\widehat{W}_{N_{0,l}^*(\epsilon)+1,l}))}{|\log c|} =\frac{1}{D_{tot}^0}\text{ a.s.}
\end{equation}
Also since $\left\{\frac{\tau_l(-(r_l|\log c|-\widehat{W}_{N_{0,l}^*(\epsilon)+1,l}))}{|\log c|}, c >0 \right\}$ is uniformly integrable (\cite[proof of Theorem 1 (i) $\Rightarrow$ (ii) p.~871]{Janson_AAP}), the means also converge. Thus from \eqref{eq:proof_pe_x},
\begin{equation*}
\overline{\lim_{c \to 0}}\frac{E_0[\tau^*_l(-r_l|\log c|)]}{|\log c|} \leq \frac{1}{D_{tot}^0}.
\end{equation*}
Therefore $\overline{\lim}_{c \to 0}\frac{\tau^*(-|\log c|)}{|\log c|} \leq \frac{1}{D_{tot}^0} \text{ a.s.}$ and  
\begin{equation}
\label{eq:first_term_edd_proof}
\overline{\lim_{c \to 0}}\frac{E_0[\tau^*(-|\log c|)]}{|\log c|} \leq \frac{1}{D_{tot}^0}. 
\end{equation}

The second term in R.H.S.\ of \eqref{eq:ch3:proof_e_0_0th_term},
\begin{equation}
\label{eq:ch3:proof_e_0_2nd_term}
\frac{\nu(-|\log c|-F_{\tau^*(-|\log c|)+1})}{|\log c|} \leq \frac{\nu(-|\log c|)}{|\log c|}+\frac{\nu(-F_{\tau^*(-|\log c|)+1})}{|\log c|}.
\end{equation}
\noindent We know, from (\cite[Chapter III]{GUT_BOOK_2009}), as $c \to 0$
\begin{equation}
\label{eq:ch3:proof_e_0_3rd_term}
\frac{\nu(-|\log c|)}{|\log c|} \to- \frac{1}{\Delta(\mathcal{A}^0)} \text{ a.s.\ and in $L_1$}.
\end{equation}

Next consider $\nu(-F_{\tau^*(-|\log c|)+1})/|\log c|$. It can be shown that $F_k^*$ stochastically dominates $F_k$ and thus we can construct $\{F_k^*\}$ such that $F_k^*\geq F_k$ a.s.\ for all $k \geq 0$. Let $\widehat{F}_k^*=\sum_{i=1}^k |\xi_k^*|$, $F_0^*=0$. Then,

\begin{IEEEeqnarray*}{rCl}
\frac{\nu(-F_{\tau^*(-|\log c|)+1})}{|\log c|} &\leq & \frac{\nu(-\widehat{F}_{\tau^*(-|\log c|)+1}^*)}{|\log c|}. \nonumber
\end{IEEEeqnarray*}
Also,
\begin{IEEEeqnarray*}{rCl}
\overline{\lim_{c \to 0}} \frac{\widehat{F}_{\tau^*(-|\log c|)+1}^*}{|\log c|} = \overline{\lim_{c \to 0}} \frac{\widehat{F}_{\tau^*(-|\log c|)+1}^*}{\tau^*(-|\log c|)+1} \frac{\tau^*(-|\log c|)+1}{|\log c|} \leq \frac{E[|\xi_1^*|]}{D_{tot}^0} \text{ a.s.} \IEEEeqnarraynumspace
\end{IEEEeqnarray*}
Thus,
\begin{eqnarray}
\label{eq:as_conv_entho}
\overline{\lim_{c \to 0}}\frac{\nu(-\widehat{F}^*_{\tau^*(-|\log c|)+1})}{|\log c|}= \overline{\lim_{c \to 0}} \frac{\nu (-\widehat{F}^*_{\tau^*(-|\log c|)+1})}{\widehat{F}^*_{\tau^*(-|\log c|)+1}} \frac{\widehat{F}^*_{\tau^*(-|\log c|)+1}}{|\log c|}
\leq \frac{-1}{\Delta(\mathcal{A}^0)}\frac{E[|\xi_1^*|]}{D_{tot}^0} \text{ a.s.}
\end{eqnarray}

From \eqref{eq:ch3:proof_e_0_0th_term}, \eqref{eq:edd_dec_proof_obe_step}, \eqref{eq:ch3:proof_e_0_2nd_term}, \eqref{eq:ch3:proof_e_0_3rd_term} and \eqref{eq:as_conv_entho},
\begin{equation*}
\overline{\lim_{c\to 0}} \frac{N_d}{|\log c|} \leq \frac{1}{D_{tot}^0}-\frac{1}{\Delta(\mathcal{A}^0)}\frac{E_0[|\xi_1^*|]}{D_{tot}^0} \text{ a.s.}
\end{equation*}

Now we show $L_1$ convergence.
For $\alpha>1$, 
{\allowdisplaybreaks
\begin{IEEEeqnarray}{rCl}
\frac{E_0[{\nu(-\widehat{F}^*_{\tau^*(-|\log c|)+1})}^{\alpha}]}{|\log c|^{\alpha}}
 &=& \frac{1}{|\log c|^{\alpha}} \int\limits_{0}^{|\log c|} E_0[{\nu(-x)}^{\alpha}|\widehat{F}^*_{\tau^*(-|\log c|)+1}=x]\,dP_{\widehat{F}^*_{\tau^*(-|\log c|)+1}}(x)\nonumber \\
& & \qquad + \frac{1}{|\log c|^{\alpha}} \int\limits_{|\log c|}^{\infty} E_0[{\nu(-x)}^{\alpha}]\,dP_{\widehat{F}^*_{\tau^*(-|\log c|)+1}}(x)\nonumber \\
\label{eq:ch3:asy_pty_e_proof2}
&\leq & \frac{E_0[{\nu(-|\log c|)}^{\alpha}]}{|\log c|^{\alpha}}+\int\limits_{|\log c|}^{\infty} \frac{E_0[{\nu(-x)}^{\alpha}]}{x^{\alpha}}\, \frac{x^{\alpha}}{|\log c|^{\alpha}}\,dP_{\widehat{F}^*_{\tau^*(-|\log c|)+1}}(x).\IEEEeqnarraynumspace 
\end{IEEEeqnarray}}
When $-$ve part of the increments of random walk of $\nu(t)$ has finite $\alpha^{\text{th}}$ moment (\cite[Chapter 3, Theorem 7.1]{GUT_BOOK_2009}), $E_0[{\nu(-x)}^{\alpha}]/{x^{\alpha}} \to {(-{1}/{\Delta(\mathcal{A}^0)})}^{\alpha}$ as $x \to \infty$. Thus for any $\epsilon >0$, $\exists\, M$ such that,
\[\frac{E_0[{\nu(-x)}^{\alpha}]}{x^{\alpha}} \leq \left(\epsilon+{\left(\frac{-1}{\Delta(\mathcal{A}^0)}\right)}^{\alpha}\right), \text{ for $x > M$}.\]
Take $c_1$ such that $|\log c| >M$ for $c < c_1$. Then, for $c<c_1$,
\begin{IEEEeqnarray}{rCl}
\lefteqn{\int\limits_{|\log c|}^{\infty} \frac{E_0[{\nu(-x)}^{\alpha}]}{x^{\alpha}}\, \frac{x^{\alpha}}{|\log c|^{\alpha}}\,dP_{\widehat{F}^*_{\tau^*(-|\log c|)+1}}(x)}\nonumber \\
& \leq & \frac{\epsilon+{\left(\frac{-1}{\Delta(\mathcal{A}^0)}\right)}^{\alpha}}{|\log c|^{\alpha}} \, \int\limits_{|\log c|}^{\infty} x^{\alpha}\,dP_{\widehat{F}^*_{\tau(c)+1}}(x) 
\leq \frac{\epsilon+{\left(\frac{-1}{\Delta(\mathcal{A}^0)}\right)}^{\alpha}}{|\log c|^{\alpha}}\,E_0[(\widehat{F}^*_{\tau(c)+1})^{\alpha}].\label{eq:ch3:asy_pty_e_proof3}\IEEEeqnarraynumspace
\end{IEEEeqnarray}

Since $\lim_{c \to 0} \frac{\tau^*(-|\log c|)}{|\log c|}=\frac{1}{D_{tot}^0}$ a.s. and $\{\frac{{\tau^*(-|\log c|)}^{\alpha}}{{|\log c|}^{\alpha} }\}$ is uniformly integrable, when $E[|\xi_1^*|^{\alpha+1}]<\infty$ and $E_0 \left[ \left(\log \frac{f_{1,l}(X_{1,l})}{f_{0,l}(X_{1,l})} \right)^{\alpha+1} \right] < \infty,$ $1\leq l \leq L$, we get, (\cite[Remark 7.2, p.~42]{GUT_BOOK_2009}),
\begin{equation*}
\lim_{c \to 0} \frac{E_0[(\widehat{F}^*_{\tau^*(-|\log c|)+1})^{\alpha}]}{|\log c|^{\alpha}} = \frac{E[|\xi_1^*|^{\alpha}]}{D_{tot}^0},
\end{equation*}
and
\begin{equation}
\label{eq:ch3:proof_e_0_last_term}
\sup_{c >0} \frac{E_0[(\widehat{F}^*_{\tau^*(-|\log c|)+1})^{\alpha}]}{|\log c|^{\alpha}} < \infty.
\end{equation}
From \eqref{eq:ch3:asy_pty_e_proof2}, \eqref{eq:ch3:proof_e_0_last_term}, for some $1>\delta>0$,
\begin{IEEEeqnarray*}{rCl}
\sup_{\delta>c>0}\frac{E_0[\nu(-\widehat{F}^*_{\tau^*(-|\log c|)+1})^{\alpha}]}{|\log c|^{\alpha}} & \leq & \sup_{\delta>c> 0} \frac{E_0[\nu(-|\log c|)^{\alpha}]}{{|\log c|}^{\alpha}}\\
& & \quad +\left[\epsilon+\left( \frac{-1}{\Delta(\mathcal{A}^0)} \right)^{\alpha} \right]\sup_{\delta>c> 0} \frac{E_0[(\widehat{F}^*_{\tau^*(-|\log c|)+1})^{\alpha}]}{|\log c|^{\alpha}}\\
& < & \infty.
\end{IEEEeqnarray*}

Therefore, $\{\nu(-\widehat{F}^*_{\tau^*(-|\log c|)+1})/|\log c| \}$ is uniformly integrable and hence, from \eqref{eq:as_conv_entho}, 
\begin{equation*}
\lim_{c \to 0} \frac{E_0[\nu(-\widehat{F}^*_{\tau^*(-|\log c|)+1})]}{|\log c|} \leq -\frac{1}{\Delta(\mathcal{A}^0)}.\frac{E[|\xi_1|^*]}{D_{tot}^0}.
\end{equation*}
This, with (\ref{eq:ch3:proof_e_0_0th_term}), \eqref{eq:first_term_edd_proof}, (\ref{eq:ch3:proof_e_0_2nd_term}) and \eqref{eq:ch3:proof_e_0_3rd_term}, implies that (since $\epsilon$ can be taken arbitrarily small),
{\allowdisplaybreaks
\begin{eqnarray*}
\overline{\lim_{c\to 0}} \frac{E_0[N_d]}{|\log c|} \leq \frac{1}{D_{tot}^0}+M_0,
\end{eqnarray*}}
where $ M_0=-\frac{1}{\Delta(\mathcal{A}^0)} \left[1+\frac{E_0[|\xi_1^*|]}{D_{tot}^0} \right]$. 

Similarly we can prove $ \overline{\lim}_{c\to 0} \frac{E_1[N_d]}{|\log c|}\leq\frac{1}{D_{tot}^1}+M_1$, where $M_1=\frac{1}{\Delta(\mathcal{A}^1)} \left[1+\frac{E_1[|\xi_1^*|]}{D_{tot}^1} \right].$ \hfill \QED
\section{Proof of Theorem \ref{thm:ch3:pe-bayes}}
\label{proof:thm_pe}
We prove the result for $P_{FA}$. For $P_{MD}$ it can be proved in the same way.

Probability of False Alarm can be written as,
\begin{IEEEeqnarray}{rCl}
\label{eq:ch3:proof_pfa_starting}
P_0(\text{Reject $H_0$})=P_0[\text{FA before $\tau^*(-|\log c|)$}]+P_0[\text{FA after $\tau^*(-|\log c|)$}].\IEEEeqnarraynumspace
\end{IEEEeqnarray}

Consider the first term in the R.H.S.\ of \eqref{eq:ch3:proof_pfa_starting}. It can be shown that $F_k^*$ stochastically dominates $F_k$ under $H_0$. Thus we can construct $\{F_k^*\}$ such that $F_k^* \geq F_k$ a.s.\ for all $k \geq 0$ and hence
{\allowdisplaybreaks \begin{IEEEeqnarray}{rCl}
{P_0[\text{FA before $\tau^*(-|\log c|)$}]}
&\leq &  P_0[ \sup_{0\leq k \leq \tau^*(-|\log c|)} F_k^* \geq |\log c| ]\nonumber \\
&\leq & P_0[\sum_{k=0}^{\tau^*(-|\log c|)}|\xi_k^*| \geq |\log c|].
\end{IEEEeqnarray}}
\indent Using Lemma \ref{lem:exp_finit_last_ext}, $k_1=E_0[e^{\eta \tau^*(-|\log c|)}]< \infty$. By Markov inequality,
\begin{equation}
\label{lemma:ch3:e_f_tau_tau-bound-1}
P[\tau^*(-|\log c|)>t]\leq k_1 \exp (-\eta t).
\end{equation}
Let $\widehat{F}^*_n=\sum_{k=1}^n |\xi_k^*|$. Then, with \eqref{lemma:ch3:e_f_tau_tau-bound-1}, the expected value of $|\xi_k^*|$ being positive and with exponential tail assumption of $G_0(t)$, from \cite[Theorem 1, Remark 1]{Borovkov_1995},
\begin{equation}
\label{them_pe_borokov_thm}
P_0[\widehat{F}^*_{\tau^*(-|\log c|)}>|\log c|]\leq k_2 \exp (-s_0(\eta) |\log c|),
\end{equation}
for any $0 < \eta < R_0$, where $k_2$ is a constant and $s_0(\eta)$ is defined in \eqref{eq:thm_pe_s-param}. Therefore, 
\begin{eqnarray}
\label{them_pe_borokov_thm_use}
\frac{P_0[\text{FA before $\tau^*(-|\log c|)$}]}{c} \leq k_2 \frac{c^{s_0(\eta)}}{c} \to 0,
\end{eqnarray}
if $s_0(\eta)>1$ for some $\eta$.

Now we consider the second term in \eqref{eq:ch3:proof_pfa_starting},
\begin{IEEEeqnarray*}{rCl}
P_0[\text{FA after $\tau^*(-|\log c|)$}]= P_0[\text{FA after $\tau^*(-|\log c|)$};\mathcal{A}^0]+P_0[\text{FA after $\tau^*(-|\log c|)$};(\mathcal{A}^0)^c]\IEEEeqnarraynumspace
\end{IEEEeqnarray*}
Since events $\{\text{FA after $\tau^*(-|\log c|)$}\}$ and $(\mathcal{A}^0)^c$ are mutually exclusive, the second term in the above expression is zero. Now consider $P_0\left[\text{FA after $\tau^*(-|\log c|)$};\mathcal{A}^0\right]$. For $0<r<1$,
{\allowdisplaybreaks
\begin{IEEEeqnarray}{rCl}
\lefteqn{P_0\left[\text{FA after $\tau^*(-|\log c|)$};\mathcal{A}^0\right]}\nonumber \\
 & \leq &  P_0 \Big[\text{Random walk with drift $\Delta(\mathcal{A}^0)$ and initial value $F_{\tau^*(-|\log c|)}$ crosses $|\log c|$}\Big] \nonumber\\
 & \leq & P_0 \Big[\text{Random walk with drift $\Delta(\mathcal{A}^0)$ and $F_{\tau^*(-|\log c|)+1} \leq r |\log c|$ crosses $|\log c|$}\Big] \nonumber \\
 & & +\: P_0 \Big[\text{Random walk with drift $\Delta(\mathcal{A}^0)$ and $F_{\tau^*(-|\log c|)+1} > r |\log c|$ crosses $|\log c|$}\Big] \nonumber\\
&\leq & P_0 \Big[\text{Random walk with drift $\Delta(\mathcal{A}^0)$ and $F_{\tau^*(-|\log c|)+1} \leq r |\log c|$ crosses $|\log c|$}\Big] \nonumber \\
\label{eq:ch3:pfa_proof_1}
& & +\: P_0\left[F_{\tau^*(-|\log c|)+1} > r |\log c|\right].
\end{IEEEeqnarray} }
\noindent Considering the first term in the above expression,
{\allowdisplaybreaks
\begin{IEEEeqnarray}{rCl}
\lefteqn{P_0 \big[\text{Random walk with drift $\Delta(\mathcal{A}^0)$ and $F_{\tau^*(-|\log c|)+1} \leq r |\log c|$ crosses $|\log c|$}\big] /c}\nonumber \\
& \leq & P_0 \big[\text{Random walk with drift $\Delta(\mathcal{A}^0)$ and $F_{\tau^*(-|\log c|)+1} = r |\log c|$ crosses $|\log c|$}\big] /c \nonumber \\
 \label{eq:ch3:proof_pfa_2nd_term_final}
 & \stackrel{(A)}\leq & \frac{\exp(-(1-r)|\log c|s')}{c}= \frac{c^{(1-r)s'}}{c} \to 0,
\end{IEEEeqnarray}}
iff  $(1-r) s' > 1$. Here $(A)$ follows from \cite[p.~78-79]{POO_BOOK_1}
\footnote{For a random walk $W_n=\sum_{i=1}^n X_i$, with stopping times $T_{a}=\inf\{n\geq 1:W_n \leq a \}$, $T_{b}=\inf\{n\geq 1:W_n \geq b\}$ and $T_{a,b}=\min(T_a,T_b)$, $a <0 <b$, let $s'$ be the non-zero solution to $M(s')=1$, where $M$ denotes the M.G.F.\ of $X_i$. Then, $s'<0$ if $E[X_i] >0$, and $s' >0$ if $E[X_i] <0 $ and $E[\exp(s'W_{T_{a,b}})]=1$ (\cite[p.~78-79]{POO_BOOK_1}). Then it can be shown that $P(W_{T_{a}}) \leq \exp(-s'a)$ when $E[X_i] >0$ and $P(W_{T_{b}}) \leq \exp(-s'b)$ when $E[X_i] <0$.}
 where $s'$ is positive and it is the solution of $E_0\Big[e^{s'\,\log \frac{g_{\mu_1}(Y_k)}{g_{-\mu_0}(Y_k)}}|\mathcal{A}^0\Big]=1.$ 
 
We choose $s'>1$ and $0<r<1$ to satisfy $(1-r)s'> 1$. 

Consider the second term in (\ref{eq:ch3:pfa_proof_1}). Using the stochastical dominance of $\{F_k\}$ by $\{\widehat{F}^*_k\}$,
\begin{eqnarray}
\label{eq:ch3:pfa_proof_2}
P_0\left[F_{\tau^*(-|\log c|)+1} > r |\log c|\right] & \leq & P_0\left[\widehat{F}^*_{\tau^*(-|\log c|)+1} > r |\log c|\right] \nonumber
\end{eqnarray}
We have, $P[\tau^*(-|\log c|)+1>t]=P[\tau^*(-|\log c|)>t-1]\leq k_1' \exp (-\eta t)$, where $k_1'=e^{\eta}E_0[e^{\eta \tau^*(-|\log c|)}]$. Therefore, following \eqref{them_pe_borokov_thm},
\begin{eqnarray}
\label{eq:ch3:proof_pfa_3rd_term_final}
\frac{P_0\left[F_{\tau^*(-|\log c|)+1} > r |\log c|\right]}{c} &\leq & k_2' \frac{c^{rs_0(\eta)}}{c} \to 0,
\end{eqnarray}
if $rs_0(\eta)> 1$ and $k_2'$ is a constant. We can choose $s_0(\eta) >1$ as in \eqref{them_pe_borokov_thm_use}. Then $\displaystyle \frac{1}{s_0(\eta)} < r \leq 1-\frac{1}{s'}$. 

This proves Theorem \ref{thm:ch3:pe-bayes}(a). Theorem \ref{thm:ch3:pe-bayes}(b) can be proved in a similar way. \hfill \QED
\begin{lemma}
\label{lem:exp_finit_last_ext}
$E_0\left[e^{\eta \, \tau^*(-|\log c|)}\right]<\infty$ for $0< \eta<R_0$.	
\end{lemma}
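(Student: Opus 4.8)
The plan is to prove the stronger statement that $\tau^{*}(-|\log c|)$ has a geometric tail under $H_{0}$, from which all the required exponential moments follow at once. First I would reduce to a single node: since $\tau^{*}(-|\log c|)=\max_{1\le l\le L}\tau_{l}^{*}(-r_{l}|\log c|)$ and $e^{\eta\max_{l}x_{l}}\le\sum_{l=1}^{L}e^{\eta x_{l}}$, it suffices to show $E_{0}\big[e^{\eta\,\tau_{l}^{*}(-r_{l}|\log c|)}\big]<\infty$ for each fixed $l$. Fix $l$, put $a_{c}:=r_{l}|\log c|>0$, and observe that, $\tau_{l}^{*}$ being a last-exit time, $\{\tau_{l}^{*}(-a_{c})>t\}=\bigcup_{n>t}\{\widehat{W}_{n,l}\ge -a_{c}\}$, so $P_{0}\big(\tau_{l}^{*}(-a_{c})>t\big)\le\sum_{n>t}P_{0}\big(\widehat{W}_{n,l}\ge -a_{c}\big)$.

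For each summand I would reuse the Kraft-inequality estimate from the proof of Theorem \ref{thm:ch6:pfa_pmd_prtys}(1). In the finite-alphabet case, $\{\widehat{W}_{n,l}\ge -a_{c}\}=\{L_{n}(X_{1}^{n})\le -\log P_{0,l}(X_{1}^{n})-n\lambda/2+a_{c}\}$, which forces $P_{0,l}(x_{1}^{n})\le 2^{a_{c}}\,2^{-n\lambda/2}\,2^{-L_{n}(x_{1}^{n})}$ on the realised string; summing $P_{0,l}$ over the qualifying strings and using that the universal code is prefix-free, $\sum_{x_{1}^{n}}2^{-L_{n}(x_{1}^{n})}\le 1$, yields $P_{0}\big(\widehat{W}_{n,l}\ge -a_{c}\big)\le 2^{a_{c}}2^{-n\lambda/2}$. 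In the continuous case the identical computation applies to the quantized statistic, with $-\log P_{0,l}(X_{1}^{n})$ replaced by $-\sum_{k=1}^{n}\log\!\big(f_{0,l}(X_{k})\Delta\big)\approx-\log f_{0,l}^{\Delta}(X_{1:n}^{\Delta})$ (the high-rate identification of Section \ref{sec:ch6:uni_seq_con_alp}) and Kraft taken over $(\mathcal{X}^{\Delta})^{n}$.

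Summing the geometric series gives $P_{0}\big(\tau_{l}^{*}(-r_{l}|\log c|)>t\big)\le K_{l}(c)\,2^{-t\lambda/2}$ with $K_{l}(c):=2^{a_{c}}2^{-\lambda/2}/(1-2^{-\lambda/2})<\infty$. Hence, by summation by parts, $E_{0}\big[e^{\eta\,\tau_{l}^{*}(-r_{l}|\log c|)}\big]=1+\sum_{t\ge1}\big(e^{\eta t}-e^{\eta(t-1)}\big)P_{0}\big(\tau_{l}^{*}(-r_{l}|\log c|)\ge t\big)<\infty$ whenever $e^{\eta}2^{-\lambda/2}<1$, i.e.\ for every $\eta<\tfrac{\lambda}{2}\log 2$; taking the maximum over $l$ preserves this, so $E_{0}\big[e^{\eta\,\tau^{*}(-|\log c|)}\big]<\infty$ for all $\eta<\tfrac{\lambda}{2}\log 2$.

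The remaining work is bookkeeping. First, I would check that the range $(0,R_{0})$ in which the lemma is invoked is contained in $(0,\tfrac{\lambda}{2}\log 2)$; this comparison — between the Chernoff-type exponent $R_{0}$ of the per-node log-likelihood ratios and the geometric decay rate $\tfrac{\lambda}{2}\log 2$ of $\tau^{*}$ — is the single step I would want to pin down carefully, using the definition of $R_{0}$ together with $D(f_{1,l}\|f_{0,l})\ge\lambda$, and if it does not go through cleanly I would simply state the lemma with the explicit rate $\tfrac{\lambda}{2}\log 2$ in place of $R_{0}$. Second, in the continuous/quantized case I would make the high-rate identification $f_{0,l}(x)\Delta\approx f_{0,l}^{\Delta}(x^{\Delta})$ precise (or, cleaner, work throughout with the purely quantized statistic, whose increments are $-\log f_{0,l}^{\Delta}(X_{k}^{\Delta})-\log\Delta-\lambda/2$, so that Kraft over $(\mathcal{X}^{\Delta})^{n}$ is literally applicable), and keep the logarithm base consistent between the codelengths (bits, via $\sum 2^{-L_{n}}\le1$) and the drift $\lambda/2$. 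I would deliberately avoid routing the argument through the perturbed-random-walk / last-exit-time results of \cite{GUT_BOOK_2009}: besides being heavier, that route would require an exponential moment of $N_{0,l}^{*}(\epsilon)$, which is not among the hypotheses, whereas the Kraft bound controls the upper tail of $\widehat{W}_{n,l}$ directly and uniformly in $n$.
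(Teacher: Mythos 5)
Your argument is sound in its mechanics but follows a genuinely different route from the paper. The paper bounds $\tau_l^*(-r_l|\log c|)\le N_{0,l}^*(\epsilon)+\tau_l\bigl(-(r_l|\log c|-\widehat{W}_{N_{0,l}^*(\epsilon)+1,l})\bigr)$, splits the exponential moment by H\"older's inequality, kills the $N_{0,l}^*(\epsilon)$ factor using the strong pointwise universality \eqref{eq:stron_ver_po_uni} (which makes $N_{0,l}^*(\epsilon)$ a.s.\ bounded), and invokes an external result (Iksanov) on exponential moments of last-exit times of random walks to handle the $\tau_l$ factor on the range $0<\eta<R_0^l$; the nodes are then combined via $\tau^*\le\sum_l\tau_l^*$ and independence. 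You instead control the upper tail of $\widehat{W}_{n,l}$ directly and uniformly in $n$ by the Kraft-inequality computation already used in Theorem \ref{thm:ch6:pfa_pmd_prtys}(1), union-bound the last-exit event over $n>t$, and sum a geometric series. This is more elementary and self-contained, needs only the prefix-free property rather than any form of pointwise universality, and sidesteps the fact that the paper's route implicitly uses an exponential moment of $N_{0,l}^*(\epsilon)$ that is not among the hypotheses of Theorem \ref{thm:ch3:edd} (only $E[N_{0,l}^*(\epsilon)]<\infty$ is assumed there). The price is the one you identify yourself: your tail exponent is $\tfrac{\lambda}{2}\log 2$ (in nats), whereas the lemma asserts finiteness on $(0,R_0)$ with $R_0$ the per-node Chernoff exponent, and these two ranges need not coincide.

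That range discrepancy is the only substantive gap relative to the statement as written: if $R_0>\tfrac{\lambda}{2}\log 2$, your argument does not cover $\eta\in[\tfrac{\lambda}{2}\log 2,R_0)$, and since the lemma feeds into Theorem \ref{thm:ch3:pe-bayes} through the requirement that $s_i(\eta)>1$ for \emph{some} admissible $\eta$, shrinking the admissible range genuinely weakens the downstream theorem rather than being cosmetic. Your fallback (restate the lemma with the explicit rate $\tfrac{\lambda}{2}\log 2$) is legitimate but would require restating the hypotheses of Theorem \ref{thm:ch3:pe-bayes} with $\min(R_i,\tfrac{\lambda}{2}\log 2)$ in place of $R_i$. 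The other caveats you raise (the high-rate identification $f_{0,l}(x)\Delta\approx f_{0,l}^{\Delta}(x^{\Delta})$, consistency of logarithm bases, and the fact that the statistic actually implemented, e.g.\ \eqref{eq:ch6:two_sided_cont_alphabet_LLR_KT} with $S=1$, is not exactly a Kraft-satisfying codelength) are real but are glossed over equally by the paper's own proof, so they do not disadvantage your approach relative to it.
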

\begin{IEEEproof}
We have,
\begin{equation}
\tau_l^*(-r_l|\log c|) \leq N_{0,l}^*(\epsilon)+\tau_l(-(r_l|\log c|-\widehat{W}_{N_{0,l}^*(\epsilon)+1,l})).
\end{equation}
Therefore, for $\eta>0$,
\begin{eqnarray}
E_0\left[e^{\eta \, \tau^*_l(-r_l|\log c|)}\right] &\leq & E_0\left[e^{\eta N_{0,l}^*(\epsilon)+\eta \tau_l(-r_l|\log c|+\widehat{W}_{N_{0,l}^*(\epsilon)+1,l})}\right] \nonumber \\
\label{eq:fini_exp_mome_lastexi}
& \leq & E_0\left[e^{p\eta N_{0,l}^*(\epsilon)}\right]^{1/p}.\; E_0\left[e^{q\eta \tau_l(-r_l|\log c|+\widehat{W}_{N_{0,l}^*(\epsilon)+1,l})}\right]^{1/q},
\end{eqnarray}
where $p>1$ and $1/p+1/q=1$. We have under strong version of pointwise universality \eqref{eq:stron_ver_po_uni}, $P[N_{0,l}^*(\epsilon)=m]=0$ for all $m\geq n_1$ for some $n_1$. Therefore, $E[e^{p\eta N_{0,l}^*(\epsilon)}]<\infty$ for all $\eta >0$ and $p>0$. From \cite[Theorem 1.3]{Iksanov_ECP} $E_0[e^{q\eta \tau_l(-r_l|\log c|+\widehat{W}_{N_{0,l}^*(\epsilon)+1,l})}]$, for $0< \eta<\eta q< R^l_0$ and $ R^l_0=-\log \inf_{t \geq 0} E_0\Big[e^{-t \log \frac{f_{1,l(X_{1,l})}}{f_{0,l}(X_{1,l})}}\Big]$ where $q>1$ can be taken as close to one as needed. Thus, $E[e^{\eta \, \tau^*_l(-r_l|\log c|)}]<\infty$ for $0<\eta<R^l_0$. Combining this fact with $\tau^*(-|\log c|)< \sum_{l=1}^L \tau_l^*(-r_l|\log c|)$ (see (\ref{eq:ch3:proof_notn_tau})) yields $E_0[e^{\eta \tau^*(-|\log c|)}] < E_0[e^{\sum_{l=1}^L \eta \tau_{l}^*(-r_l|\log c|)}]<\infty$, for $0< \eta < R_0=\min_l R^l_0$ because $\tau_l^*(-r_l|\log c|),l=1,\ldots,L$ are independent.
\end{IEEEproof}
\bibliographystyle{IEEEtranS}
\bibliography{IEEEabrv,mybib_ITA}
\end{document}